\documentclass[11pt,a4paper]{article}

\usepackage[T1]{fontenc}
\usepackage{lmodern}
\usepackage[margin=25mm]{geometry}
\usepackage{amsmath,amssymb,amsthm}
\usepackage{graphicx}
\usepackage{xcolor}
\usepackage{braket}
\usepackage{placeins}
\usepackage{microtype}
\usepackage[hidelinks]{hyperref}
\graphicspath{{figures/}}
\hypersetup{
  pdftitle={Genuine Information Backflow Requires Noncommuting Initial System States},
  pdfauthor={Abdullah Alsharari and Francesco Buscemi},
  pdfsubject={Quantum Physics; open quantum systems},
  pdfkeywords={non-Markovian dynamics, information backflow, quantum mutual information}
}

\newcounter{defthM}
\newenvironment{definition}[1]{%
  \refstepcounter{defthM}\trivlist
  \item[\hskip\labelsep\bfseries #1~\arabic{defthM}.]\normalfont\ignorespaces
}{\endtrivlist}
\newenvironment{theorem}[1]{%
  \refstepcounter{defthM}\trivlist
  \item[\hskip\labelsep\bfseries #1~\arabic{defthM}.]\itshape\ignorespaces
}{\endtrivlist}

\title{Genuine Information Backflow Requires
Noncommuting Initial System States}

\author{Abdullah Alsharari\thanks{\texttt{alsharari.abdullah.f1@s.mail.nagoya-u.ac.jp}}\quad and\quad Francesco Buscemi\thanks{\texttt{buscemi@nagoya-u.jp}}\\[0.5em]
{\small\itshape Graduate School of Informatics, Nagoya University, Nagoya 464-8601, Japan}}
\date{}

\begin{document}

\maketitle
  
\begin{abstract}
The Gorini--Kossakowski--Lindblad--Sudarshan (GKLS) equation, a cornerstone of open-system theory, describes Markovian semigroup dynamics in which information about the initial preparation cannot revive. Beyond this regime, information revivals are widely interpreted as backflow from a non-Markovian environment. However, a decrease followed by an increase in information retrievable from the system need not imply that information has returned from the environment [PRX Quantum 6, 020316 (2025)]. It may instead arise from noise correlations unrelated to the system's initial preparation, carried by inert degrees of freedom that never interact with the system and therefore cannot mediate backflow. Such revivals are called extrinsic; genuine backflow is reserved for revivals that cannot be explained this way. The framework developed in that work, however, assumes access to arbitrary initial system states, an assumption often incompatible with physical constraints or limited experimental control. This leaves open which properties of the available preparations are necessary for an information revival to certify genuine backflow. Here we extend the framework to allow the description of restricted families of initial states and show that genuine backflow is possible only if the family contains noncommuting states. If all available states commute, as in a classical stochastic process, every revival can be explained by a suitable inert extension and is therefore extrinsic. Noncommutativity of the available system states is thus necessary for information revivals to provide an operational signature of environmental memory.

\par\smallskip
\noindent\textbf{Keywords:} Open quantum systems;
non-Markovian dynamics; information backflow;
quantum mutual information; genuine revival.

\end{abstract}

\section{Introduction}

A classical stochastic process is Markovian when, conditioned on the present state, its future is independent of its past. This property is formulated directly in terms of the joint probability distribution of the process at multiple times. It implies, in particular, that information about the initial state cannot increase as the system evolves: once such information has been lost from the present state, it cannot later reappear. This information-theoretic consequence of Markovianity follows from the data-processing inequality and has a long history in probability, statistics, physics, and information theory~\cite{Ingarden1997InformationDynamics,norris_1997,Cover2006}.

Extending this notion of Markovianity to quantum processes is not straightforward. Quantum theory does not provide a single object that directly generalizes the joint multi-time probability distribution of a classical stochastic process, at least not without introducing additional choices concerning measurements or interventions. As a consequence, several inequivalent notions of quantum Markovianity have been proposed. At the level of reduced dynamics, a landmark achievement is the Gorini--Kossakowski--Lindblad--Sudarshan (GKLS) equation, whose characterization of the generators of completely positive trace-preserving dynamical semigroups has served as a cornerstone of open-system theory for fifty years~\cite{gorini-kossakowski-sudarshan-1976,lindblad-1976-generators}.

For a quantum dynamical semigroup, evolution between any two times is described by a completely positive trace-preserving map. Divisibility-based approaches retain this property without requiring the semigroup law, leading to the broader notion of CP-divisibility~\cite{Rivas-Huelga-Plenio_2010}. This includes evolutions generated by regular time-dependent GKLS generators with nonnegative rates at every time~\cite{chruscinski-2022-dynamical-maps}. By the data-processing inequality, CP-divisible dynamics cannot increase the mutual information between the system and an inert reference. The GKLS formalism thus provides a natural benchmark for the absence of information revivals.

Other approaches use the monotonic behavior of distinguishability or correlations as an information-theoretic criterion of quantum Markovianity~\cite{Breuer-Laine-Piilo_2009,luo-2012-non-markovianity-via-correlations,megier-prl-2021,vacchini-2022-review}. Under suitable conditions, divisibility and information-theoretic criteria can be reconciled~\cite{buscemi-datta-2016-divisibility}, but their relationship depends on the information quantity and the family of preparations considered, and they are not equivalent in general.

In this paper we adopt an information-theoretic approach. The system is initially correlated with an inert reference, and the information accessible from the system is measured by its mutual information with that reference. Suppose that this quantity decreases between two times and then increases at a later time. Such an increase, which we call an \textit{information revival}, cannot be produced by a channel acting locally on the system at the intermediate time. It therefore excludes a CP-divisible description of the reduced evolution over that interval, including a description by a GKLS semigroup.

In the standard \textit{backflow} picture, information about the initial preparation
first leaves the system and becomes available only when the active environment
is retained alongside the system. A later system--environment interaction then
transfers part of that information back to the system, producing the observed
revival. The increase of correlations between system and reference is therefore
commonly interpreted as evidence for a dynamical return of information from the
environment.

An information revival, however, does not by itself prove that this is what
happened. The revival is a property of the reduced reference--system process,
whereas its explanation depends on which inaccessible degrees of freedom are
retained. Ref.~\cite{buscemi-2025-non-causal-backflow} showed that a revival may
disappear after including auxiliary systems that never interact with the open
system. Such inert systems cannot dynamically return information. They only
reveal correlations that were already present before the evolution. In this
case, the revival is extrinsic and no genuine backflow from the active
environment is needed to explain it.

This observation changes the question that should be asked about an information
revival. The relevant issue is not simply whether information increases, or by
how much, but whether the increase can be explained without retaining degrees of
freedom that actively participate in the system--environment interaction. A
revival represents genuine information backflow only when all explanations based
solely on pre-existing inert correlations can be excluded.

A complementary line of work classifies non-Markovian reduced dynamics according
to the type of memory required for its
realization~\cite{backer2024local,backer2025entropic}. The central question is
whether a sequence of reduced channels can be realized by storing only a
classical record between successive steps, or whether a persistent quantum
memory is required. In a classical-memory realization, the stored record is used
to select the system's subsequent evolution.

The question addressed here is different. The classical-versus-quantum-memory
distinction concerns the type of memory needed to realize the reduced dynamics,
whereas the extrinsic-versus-genuine distinction concerns the explanation of a
particular information revival. In particular, the classical record in the
former framework is functionally involved in the realization, since it
determines the subsequent system evolution. By contrast, the inert registers
considered here never participate in the interactions and only reveal
correlations already present \textit{before} the dynamics. The two
classifications are therefore complementary, but should not be identified.

The analysis of Ref.~\cite{buscemi-2025-non-causal-backflow} is formulated at
the level of quantum channels. It uses a maximally entangled reference--system
state and therefore probes the full quantum input state space through the
Choi--Jamio\l{}kowski
representation~\cite{jamiolkowski1972linear,choi1975completely}. This is natural
when the channel itself is the object of study. It is less natural when only a
restricted family of initial states can be prepared. Such restrictions arise,
for example, from conservation or superselection rules, symmetry constraints,
limited experimental control, or classicality. Embedding a classical stochastic
process into a quantum channel also introduces nonclassical inputs that are not
part of the original operational process.

Here we extend the distinction between extrinsic revivals and genuine backflow
to processes with restricted initial preparations. The allowed states of the
system $Q$ are represented by the states obtained by steering a fixed bipartite
probe $\rho_{RQ}$ from a reference $R$~\cite{buscemi-NCPTP}. The probe plays the
role of a restricted-input Choi state: it determines both the accessible
preparations and the reference with respect to which information is measured.
Unlike the usual Choi state, it may be mixed and need not be maximally
entangled.

The use of a general probe introduces new explanations that were not possible in
the channel setting. Besides inert extensions of the environment, a mixed probe
may now admit inert extensions of the initial reference--system side. We call
this the ``system side'' for brevity, although it includes the reference and any
preparation record. Information revivals may therefore admit system-side
explanations, environment-side explanations, or explanations requiring
inert registers on both sides of the initial system--environment cut.

We consider a minimal three-time process (with the initial, intermediate, and
final times denoted by $t_0<t_1<t_2$, respectively)
\[
\rho_{RQ}
\longrightarrow
\sigma_{RQ'}
\longrightarrow
\tau_{RQ''},
\]
generated by two local interactions between the system and an initially
independent environment. The reference system $R$ remains inert throughout the
entire evolution. We then enlarge the interaction model by adjoining additional
inert systems that may initially be correlated either with the reference--system
probe or with the environment, but never participate in the dynamics. A revival
is called extrinsic if retaining suitable additional inert degrees of freedom
restores the monotonic decrease of reference--system information; otherwise, it
is called genuine.

Our main conclusion is that \textit{genuine information backflow requires
quantumness in the initial preparations}. Here quantumness has a precise and
elementary meaning: the preparable states cannot all commute. A system with a
commuting preparable set may still exhibit information revivals, but every such
revival can be explained by correlations already present in suitable inert
degrees of freedom. Such revivals therefore cannot, by themselves, certify that
information has dynamically returned from the active environment.

This conclusion follows from an exact structural relation (which may be of
independent interest) between the probe and the preparations it generates: a
reference--system probe is quantum--classical on $Q$ if and only if its
preparable set is commuting.

We also compare the different explanation classes. The
 explainability on the system- and environment-side is incomparable:
each can occur without the
other. Both, however, imply global explainability, while a global explanation
need not reduce to either one-sided explanation. This gives a V-shaped inclusion
structure rather than a linear hierarchy.

The paper is organized as follows. Section~\ref{sec:formalization} introduces
preparable sets, local interaction models, and their inert extensions.
Section~\ref{sec:revivals-explanations} defines information revivals, their
explanations, and the corresponding explainability classes.
Section~\ref{sec:results} contains our main results:
Theorem~\ref{thm:strict-hierarchy} establishes the hierarchy among system-side,
environment-side, and global explanations; Theorem~\ref{thm:system-screening}
shows that every revival from a separable probe is extrinsic; and
Lemma~\ref{lem:commuting-qc} proves that a probe is quantum--classical on $Q$ if
and only if its preparable set is commuting.
Corollary~\ref{cor:classical-no-backflow} then concludes that genuine backflow
requires noncommuting initial system states.
Theorem~\ref{thm:asymmetric-conditioning} reveals an asymmetry between fixed
system-side and environment-side inert extensions. Finally,
Section~\ref{sec:numerical-illustrations} illustrates with numerical examples
the different explanatory mechanisms.

\section{Formalization}
\label{sec:formalization}

All systems are finite-dimensional. We label systems by $Q,R,E,\dots$ and their
Hilbert spaces by $\mathcal{H}_Q,\mathcal{H}_R,\mathcal{H}_E,\dots$.
For a Hilbert space $\mathcal{H}$, let
$\mathsf{L}(\mathcal{H})$ be the linear operators, $\mathsf{P}
(\mathcal{H})$ the positive semidefinite
operators, and $\mathsf{S}(\mathcal{H})$ the normalized states.
A restricted state set is
denoted by $\mathcal{S}\subseteq\mathsf{S}(\mathcal{H})$.
Quantum channels (that is, completely
positive trace-preserving linear maps) are written as
$\mathcal{N}:\mathsf{L}(\mathcal{H}_A)\to\mathsf{L}(\mathcal{H}_B)$,
or simply $\mathcal{N}:A\to B$. We use the same symbol
for a unitary operator and its unitary channel. Thus $U:QE\to Q'E'$ denotes a
unitary from $\mathcal{H}_Q\otimes\mathcal{H}_E$ to
$\mathcal{H}_{Q'}\otimes\mathcal{H}_{E'}$. We allow the
local dimensions to change during an interaction, provided the total dimensions
agree.

An open evolution of $Q$ is generated by interaction with an environment $E$.
The system starts in an allowed state
$\rho_Q\in\mathcal{S}_Q\subseteq\mathsf{S}(\mathcal{H}_Q)$, while
the environment starts in a fixed state $\omega_E\in\mathsf{S}
(\mathcal{H}_E)$. We then
consider a minimal three-time process with the initial, intermediate, and final
times denoted by $t_0<t_1<t_2$, respectively, generated by two local
interactions between the system and an initially independent environment. The
interactions during the intervals from $t_0$ to $t_1$ and from $t_1$ to $t_2$
are described by unitaries $U:QE\to Q'E'$ and $V:Q'E'\to Q''E''$, respectively.
This general formulation includes, in particular, two successive periods of
evolution under the same time-independent Hamiltonian $H$. When the system and
environment Hilbert spaces are unchanged, and setting $\hbar=1$, these are
$U=e^{-i(t_1-t_0)H}$ and $V=e^{-i(t_2-t_1)H}$, as in the numerical illustrations
of Section~\ref{sec:numerical-illustrations}.

When every input state is allowed, $\mathcal{S}_Q=\mathsf{S}
(\mathcal{H}_Q)$, the evolution is
specified by the two channels (obtained by tracing over the environment degrees
of freedom) from the initial time to $t_1$ and $t_2$:
\begin{equation*}
\mathcal{N}_{t_0\to t_1}:Q\to Q'
\qquad\text{and}\qquad
\mathcal{N}_{t_0\to t_2}:Q\to Q''.
\end{equation*}
The same information can be encoded by the Choi--Jamio\l{}kowski
representation~\cite{jamiolkowski1972linear,choi1975completely,
nielsen_chuang_2010}:
one prepares a reference $R\cong Q$ and the maximally entangled state
$|\Phi^+_{RQ}\rangle:=d^{-1/2}\sum_i|i_R\rangle\otimes|i_Q\rangle$, 
then lets the dynamics act
on $Q$. Ref.~\cite{buscemi-2025-non-causal-backflow} uses this channel-level
description.

Allowing every input state excludes many operational settings. Conservation
laws, superselection rules, symmetries, or limited control may restrict the
allowed inputs to a proper subset $\mathcal{S}_Q\subset\mathsf{S}
(\mathcal{H}_Q)$~\cite{Asymy3,Asymy5}.
In this case, the standard Choi representation is not able to describe such a
restriction. We therefore use the idea of \textit{finitely preparable
sets}~\cite{buscemi-NCPTP}.
A set $\mathcal{S}_Q\subseteq\mathsf{S}(\mathcal{H}_Q)$ is finitely
preparable if there are a finite-dimensional system (i.e., a ``pump'') $A$, with
Hilbert space $\mathcal{H}_A$, and a completely positive map
$\mathcal{F}:A\to Q$, not
necessarily trace preserving, such that
\[
\mathcal{S}_Q=
\left\{
\frac{\mathcal{F}(\rho_A)}
{\operatorname{Tr}[\mathcal{F}(\rho_A)]}
:
\rho_A\in\mathsf{S}(\mathcal{H}_A),\
\operatorname{Tr}[\mathcal{F}(\rho_A)]>0
\right\}.
\]
Thus $\mathcal{S}_Q$ is the normalized image of the completely positive map.

Equivalently~\cite{buscemi-NCPTP}, there are a finite-dimensional reference $R$
and a bipartite state $\rho_{RQ}\in\mathsf{S}
(\mathcal{H}_R\otimes\mathcal{H}_Q)$ whose steered set is
exactly $\mathcal{S}_Q$:
\[
\mathcal{S}_Q=
\left\{
\frac{\operatorname{Tr}_R[(M_R\otimes I_Q)\rho_{RQ}]}
{\operatorname{Tr}[(M_R\otimes I_Q)\rho_{RQ}]}
:
M_R\in\mathsf{P}(\mathcal{H}_R),
\operatorname{Tr}[(M_R\otimes I_Q)\rho_{RQ}]>0
\right\}.
\]
Here $M_R$ is any positive semidefinite operator on $R$, or equivalently a POVM
effect up to positive rescaling.

Every preparable set is convex,
simply because the set $\mathsf{P}(\mathcal{H}_R)$ is convex.
This includes state polytopes, simplices of commuting states, and superselection
sectors. We shall call a preparable set \emph{classical} when all its states
commute, equivalently when they are simultaneously diagonal in one basis. A
preparable set is \emph{noncommuting} when this is not the case. A classical
preparable set is a convex subset of a finite classical simplex, but need not
itself be a simplex or a polytope, as the following short example shows.

\begin{definition}{Example}(A commuting preparable set that is not a polytope)
\label{ex:commuting-non-polytope}
Let $Q$ be three-dimensional with classical basis $\{\ket{1},\ket{2},\ket{3}\}$
and let the pump $A$ be a qubit. Consider the channel
\[
\mathcal{F}(\rho)
=
\sum_{i=1}^3 \operatorname{Tr}[E_i\rho]\,\ket{i}\!\bra{i},
\]
where
\[
E_1=\frac{1}{4}(I+\sigma_x),\qquad
E_2=\frac{1}{4}(I+\sigma_y),\qquad
E_3=I-E_1-E_2
=
\frac{1}{2}I-\frac{1}{4}(\sigma_x+\sigma_y).
\]
The effects are positive and sum to $I$: in particular, the eigenvalues of $E_3$
are $(2\pm\sqrt{2})/4$.

For a qubit state
\[
\rho=\frac{1}{2}(I+x\sigma_x+y\sigma_y+z\sigma_z),
\qquad x^2+y^2+z^2\leqslant 1,
\]
the output probabilities depend only on $x$ and $y$:
\[
p_1=\frac{1+x}{4},\qquad
p_2=\frac{1+y}{4},\qquad
p_3=1-p_1-p_2=\frac{2-x-y}{4}.
\]
Although $z$ does not appear in the output, every pair $(x,y)$ with
$x^2+y^2\leqslant1$ is realized, for instance by choosing $z=0$.
Hence the preparable
set is
\[
\mathcal{S}_Q
=
\left\{
\operatorname{diag}(p_1,p_2,p_3):
(4p_1-1)^2+(4p_2-1)^2\leqslant 1,\quad
p_3=1-p_1-p_2
\right\}.
\]
All these states commute, but the disk constraint gives a curved boundary and
infinitely many extreme points. Thus $\mathcal{S}_Q$ is not a polytope.
\nobreak\hfill\ensuremath{\square}\end{definition}

\begin{definition}{Remark}(The probe is part of the data)
\label{rem:probe}
A preparable set $\mathcal{S}_Q$ need not determine a unique bipartite
probe. Different
states $\rho_{RQ}$ may generate the same steered set, and we therefore fix one
probe as part of the process specification. All information quantities and
explainability statements below are relative to this choice. One structural
property is nevertheless determined entirely by the preparable set:
Lemma~\ref{lem:commuting-qc} shows that a probe is quantum--classical on $Q$ if
and only if its steered set is commuting. Thus every probe representing a
commuting preparable set is quantum--classical and separable. By contrast, we
are aware of a counterexample showing that the same noncommuting preparable set
can be generated by both a separable and an entangled probe. Thus separability,
unlike quantum--classicality on $Q$, is not determined by the preparable set.
\nobreak\hfill\ensuremath{\square}\end{definition}

\subsection{Local interaction models and their inert extensions}

The operational data are the three reference--system states
\begin{equation}
\rho_{RQ}
\longrightarrow
\sigma_{RQ'}
\longrightarrow
\tau_{RQ''}\;,
\label{eq:three-time-rs-process}
\end{equation}
where $R$ is the reference system associated with the preparable set of initial
states of $Q$, while $Q,Q',Q''$ denote the system at the three times
$t_0,t_1,t_2$, respectively. Since the reference remains inert throughout the
process, any admissible triple necessarily satisfies
\begin{equation}
\rho_R=\sigma_R=\tau_R.
\label{eq:reference-marginal-fixed}
\end{equation}
The same steering operation represents the preparation at all three times. More
precisely, for any effect $M_R$ with nonzero steering probability, applying
$M_R$ to $\rho_{RQ}$, $\sigma_{RQ'}$, and $\tau_{RQ''}$ gives the selected
initial state and its intermediate and final outputs.
Equation~\eqref{eq:reference-marginal-fixed} ensures that the steering
probability is the same at all three times. The triple therefore represents the
dynamics on the restricted set of preparations, not merely three unrelated
reference--system states.

The states in Eq.~\eqref{eq:three-time-rs-process} however do not specify which
environmental degrees of freedom may be responsible for that evolution.

\begin{definition}{Definition}(Local interaction model)\label{def:LIM}
A triple $(\rho_{RQ},\sigma_{RQ'},\tau_{RQ''})$ is \emph{admissible} if it
admits a \emph{local interaction model} (LIM): an environment $E$, an initial
state $\omega_E\in\mathsf{S}(\mathcal{H}_E)$, and unitary interactions
\begin{equation}
U:QE\to Q'E',
\qquad
V:Q'E'\to Q''E'',
\label{eq:LIM-unitaries}
\end{equation}
such that, defining
\begin{align}
\sigma_{RQ'E'}
&:=
(I_R\otimes U_{QE})\bigl(\rho_{RQ}\otimes\omega_E\bigr)(I_R\otimes
U_{QE}^\dagger),
\label{eq:LIM-first-step}
\\
\tau_{RQ''E''}
&:=
(I_R\otimes V_{Q'E'})\sigma_{RQ'E'}(I_R\otimes V_{Q'E'}^\dagger),
\label{eq:LIM-second-step}
\end{align}
one has
\begin{equation}
\operatorname{Tr}_{E'}[\sigma_{RQ'E'}]=\sigma_{RQ'}\;,
\qquad
\operatorname{Tr}_{E''}[\tau_{RQ''E''}]=\tau_{RQ''}\;.
\label{eq:LIM-marginals}
\end{equation}
A LIM satisfying Eq.~\eqref{eq:LIM-marginals} is said to be \emph{compatible}
with the triple.
\nobreak\hfill\ensuremath{\square}\end{definition}

The initial environment is independent of $RQ$, as in the standard open-system
setting~\cite{Breuer-Petruccione,vacchini-2024-open-quantum-textbook}. The
unitaries may change the local dimensions, provided
$\mathcal{H}_Q\otimes\mathcal{H}_E\cong\mathcal{H}_{Q'}\otimes\mathcal{
H}_{E'}\cong\mathcal{H}_{Q''}\otimes\mathcal{H}_{E''}$.
Different compatible LIMs may yield the same reduced snapshots.

To distinguish a revival from its explanation, we enlarge a LIM by systems that
never take part in the dynamics. These systems may extend, purify, or label
correlations already present at $t_0$. Because no interaction acts on them, they
cannot themselves carry information away and later return it.

\begin{definition}{Definition}(Extended local interaction model)
\label{def:eLIM}
Let $(\rho_{RQ},\sigma_{RQ'},\tau_{RQ''})$ have a compatible LIM as in
Definition~\ref{def:LIM}. An \emph{extended LIM} adds finite-dimensional inert
systems $S$ and $F$ and an initial product state across the enlarged
system--environment cut,
\begin{equation}
\widetilde{\rho}_{SRQ}\otimes\widetilde{\omega}_{EF}
\in
\mathsf{S}
(\mathcal{H}_S\otimes\mathcal{H}_R\otimes\mathcal{H}_Q\otimes\mathcal{H
}_E\otimes\mathcal{H}_F)
\end{equation}
such that
\begin{equation}
\operatorname{Tr}_{S}\widetilde{\rho}_{SRQ}
=
\rho_{RQ},
\qquad
\operatorname{Tr}_F\widetilde{\omega}_{EF}
=
\omega_E.
\label{eq:eLIM-initial-marginals}
\end{equation}
The corresponding extended evolution is defined by
\begin{align}
\widetilde{\sigma}_{SRQ'E'F}
&:=
(I_{SR}\otimes U_{QE}\otimes I_F)
(\widetilde{\rho}_{SRQ}\otimes\widetilde{\omega}_{EF})
(I_{SR}\otimes U_{QE}^{\dagger}\otimes I_F),
\label{eq:eLIM-first-step}
\\
\widetilde{\tau}_{SRQ''E''F}
&:=
(I_{SR}\otimes V_{Q'E'}\otimes I_F)
\widetilde{\sigma}_{SRQ'E'F}
(I_{SR}\otimes V_{Q'E'}^{\dagger}\otimes I_F).
\label{eq:eLIM-second-step}
\end{align}
The systems $S$ and $F$, like $R$, are inert throughout the evolution. The
register $S$ extends the system side, namely the initial reference--system or
preparation side, while $F$ extends the environment side. An extended LIM
satisfying these marginal conditions is said to be compatible with the same
reduced triple.
\nobreak\hfill\ensuremath{\square}\end{definition}

\begin{definition}{Remark}
Definition~\ref{def:eLIM} deliberately excludes a general extension
\[
\widetilde{\rho}_{SRQEF}
\quad\text{with}\quad
\operatorname{Tr}_{SF}\bigl[\widetilde{\rho}_{SRQEF}\bigr]
=
\rho_{RQ}\otimes\omega_E .
\]
This is because, allowing arbitrary cross-cut correlations would mix
pre-existing and dynamically generated correlations and would make the backflow
question ambiguous.
\nobreak\hfill\ensuremath{\square}\end{definition}

Note that, by definition, tracing out the inert registers recovers the original
LIM:
\begin{equation}
\operatorname{Tr}_{SF}\bigl[\widetilde{\sigma}_{SRQ'E'F}\bigr]
=
\sigma_{RQ'E'},
\qquad
\operatorname{Tr}_{SF}\bigl[\widetilde{\tau}_{SRQ''E''F}\bigr]
=
\tau_{RQ''E''}.
\label{eq:eLIM-recovers-LIM}
\end{equation}
Thus, an extended LIM leaves the operational process unchanged and only enlarges
the set of possible explanations. Here an inert extension is called
\emph{trivial} if the added system is one-dimensional or is initially
uncorrelated with all other systems; since it remains inert, it can then be
omitted. Accordingly, if $F$ is trivial, only the system side is extended,
whereas if $S$ is trivial, only the environment side is extended.

\section{Information revivals and their explanations}
\label{sec:revivals-explanations}

For a bipartite state $\xi_{AB}$, the quantum mutual information is
\begin{equation}
I(A;B)_{\xi}
:=
H(A)_{\xi}
+
H(B)_{\xi}
-
H(AB)_{\xi},
\label{eq:mutual-information}
\end{equation}
where $H(\xi):=-\operatorname{Tr}[\xi\log\xi]$ is the von Neumann 
entropy. All logarithms are to base two, so information quantities are 
measured in bits.
For a tripartite state $\xi_{ABC}$, the quantum conditional mutual 
information
(QCMI) is
\begin{equation}
I(A;B|C)_{\xi}
:=
H(AC)_{\xi}
+
H(BC)_{\xi}
-
H(C)_{\xi}
-
H(ABC)_{\xi}.
\label{eq:conditional-mutual-information}
\end{equation}

The reference system also gives a precise meaning to the information carried by
$Q$: we measure it through the correlations between $R$ and $Q$.  From now on,
subscripts $0$, $1$, and $2$ on information quantities indicate evaluation on
the initial, intermediate, and final configurations, respectively.

\begin{definition}{Definition}(Information
revival~\cite{luo-2012-non-markovianity-via-correlations})
\label{def:information-revival}
An admissible three-time process
\begin{equation}
\rho_{RQ}
\longrightarrow
\sigma_{RQ'}
\longrightarrow
\tau_{RQ''}
\end{equation}
exhibits an \emph{information revival} between $t_1$ and $t_2$ if
\begin{equation}
I(R;Q'')_2
>
I(R;Q')_1.
\label{eq:information-revival}
\end{equation}
\nobreak\hfill\ensuremath{\square}\end{definition}

Admissibility already implies $I(R;Q')_1\leqslant I(R;Q)_0$ 
and $I(R;Q'')_2\leqslant
I(R;Q)_0$, by data processing from the initial state to each later reduced
state. Therefore, Eq.~\eqref{eq:information-revival} necessarily implies
$I(R;Q')_1<I(R;Q)_0$: some information is first lost from the system and later
recovered.

A revival rules out a local channel from $Q'$ to $Q''$. Indeed, if a channel
$\mathcal{D}:Q'\to Q''$ satisfied
\begin{equation}
\tau_{RQ''}
=
(\operatorname{id}_R\otimes\mathcal{D})(\sigma_{RQ'}),
\label{eq:local-intermediate-channel}
\end{equation}
then the data-processing inequality~\cite{nielsen_chuang_2010} would give
\begin{equation}
I(R;Q'')_2
\leqslant
I(R;Q')_1,
\label{eq:intermediate-DPI}
\end{equation}
contrary to Eq.~\eqref{eq:information-revival}. Thus the reduced intermediate
step cannot be a deterministic operation on $Q'$ alone. We now ask which
additional degrees of freedom restore monotonicity.

\begin{definition}{Definition}(Explanation of a revival)
\label{def:revival-explanation}
Consider a revival as in Eq.~\eqref{eq:information-revival} and fix one
compatible LIM or extended LIM. Retain the same physical degrees of freedom at
the two successive times, denoting them by $X'$ at $t_1$ and by $X''$ at $t_2$.
Inert registers keep the same labels, while the active environment is denoted by
$E'$ and $E''$. The corresponding marginals satisfy
\begin{equation}
\operatorname{Tr}_{X'}[\widetilde{\sigma}_{RQ'X'}]
=
\sigma_{RQ'},
\qquad
\operatorname{Tr}_{X''}[\widetilde{\tau}_{RQ''X''}]
=
\tau_{RQ''}.
\label{eq:explanation-marginals}
\end{equation}
The retained degrees of freedom \emph{explain} the revival if
\begin{equation}
I(R;Q'X')_1
\geqslant
I(R;Q''X'')_2.
\label{eq:general-explanation}
\end{equation}
\nobreak\hfill\ensuremath{\square}\end{definition}

The systems $X'$ and $X''$ are not chosen independently. They are the same
retained part of one compatible model at the two times. For example, retaining
$S$ gives $X'=X''=S$, retaining $SF$ gives $X'=X''=SF$, and retaining the active
environment together with $F$ gives $X'=E'F$ and $X''=E''F$. Thus a revival is a
property of the reduced triple $(\rho_{RQ},\sigma_{RQ'},\tau_{RQ''})$, whereas
an explanation is a property of a compatible model and a specified retained part
of that model.

Unless explicitly stated otherwise, explainability below is
\textit{existential}: the compatible model and the retained inert registers are
part of the explanation whose existence is asserted.

As a baseline, the active environment always provides an explanation.

\begin{theorem}{Proposition}(The active environment as a universal explanation)
\label{prop:environment-universal-explanation}
Every revival admitted by a LIM is explained by retaining the active
environment, that is, by taking $X'=E'$ and $X''=E''$ in
Definition~\ref{def:revival-explanation}.
\nobreak\hfill\ensuremath{\square}\end{theorem}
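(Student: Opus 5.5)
The plan is to show that retaining the full active environment turns the second step into a local unitary on the retained block $Q'E'$. Then the desired inequality $I(R;Q'E')_1\geqslant I(R;Q''E'')_2$ follows at once, with equality in fact.

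Fix a compatible LIM $(E,\omega_E,U,V)$ as in Definition~\ref{def:LIM}. Take $X'=E'$ and $X''=E''$ in Definition~\ref{def:revival-explanation}. The marginal conditions of Eq.~\eqref{eq:explanation-marginals} are exactly the compatibility conditions of Eq.~\eqref{eq:LIM-marginals}, so this is a legitimate choice of retained degrees of freedom. By Eq.~\eqref{eq:LIM-second-step}, the global state at $t_2$ is
\[
\tau_{RQ''E''}=(I_R\otimes V)\,\sigma_{RQ'E'}\,(I_R\otimes V^\dagger),
\]
where $V:Q'E'\to Q''E''$ acts only on the retained block $Q'E'$ and trivially on $R$.

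Next I would use the invariance of mutual information under local unitaries. The reference marginal is unchanged, $\tau_R=\sigma_R$. The joint entropies also agree: $H(RQ''E'')_2=H(RQ'E')_1$ and $H(Q''E'')_2=H(Q'E')_1$, because unitary conjugation, even between isomorphic but relabelled spaces, preserves the spectrum. Hence
\[
I(R;Q''E'')_2=I(R;Q'E')_1 .
\]
This equality in particular gives Eq.~\eqref{eq:general-explanation}. Alternatively, one can note that $V$ is a channel $Q'E'\to Q''E''$ and invoke data processing, as in Eq.~\eqref{eq:intermediate-DPI}, which yields the inequality directly.

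There is no real obstacle. The only point needing care is that the dimensions of $Q'E'$ and $Q''E''$ may be labelled differently, which is harmless because $V$ is unitary between spaces of equal total dimension. The conclusion holds for every compatible LIM, not just for some LIM, so the proposition holds universally.
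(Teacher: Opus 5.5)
Your proof is correct and follows the paper's own argument: the second interaction $V$ is a unitary acting only on $Q'E'$ and leaves $R$ untouched, so $I(R;Q'E')_1=I(R;Q''E'')_2$, and Eq.~\eqref{eq:general-explanation} holds with equality. Your checks of the marginal conditions and of the relabelled dimensions, and the alternative data-processing route, are valid additions.
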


\noindent{\it Proof.}
The second interaction is unitary on $Q'E'$ and leaves $R$ unchanged. Therefore
\begin{equation}
I(R;Q'E')_1
=
I(R;Q''E'')_2.
\label{eq:total-information-conservation}
\end{equation}
Equation~\eqref{eq:general-explanation} holds with equality.
\ensuremath{\blacksquare}%

This is the standard backflow explanation: information about $R$ that is absent
from $Q'$ is accessible when the active environment $E'$ is retained alongside
$Q'$, and later becomes accessible from $Q''$ alone. A genuine-backflow test
asks whether this active explanation is necessary. It may not be so. An inert
register can reveal correlations already present at $t_0$ and thereby remove the
apparent increase without serving as a dynamical
memory~\cite{buscemi-2025-non-causal-backflow}.

\begin{definition}{Definition}(Extrinsic revivals and genuine backflows)
\label{def:extrinsic-genuine-revival}
An information revival is \emph{extrinsic} if it can be explained without
retaining the active environment. Explicitly, there must exist a compatible
extended LIM and a retained inert part $X\in\{S,F,SF\}$ such that
\begin{equation}
I(R;Q'X)_1
\geqslant
I(R;Q''X)_2.
\label{eq:extrinsic-quantifier}
\end{equation}
The revival is \emph{genuine}, or a \emph{genuine backflow}, if no inert
explanation of this kind exists. Equivalently, for every compatible extended LIM
and every retained inert part $X\in\{S,F,SF\}$,
\begin{equation}
I(R;Q'X)_1
<
I(R;Q''X)_2.
\label{eq:genuine-quantifier}
\end{equation}
\nobreak\hfill\ensuremath{\square}\end{definition}

The quantifiers are asymmetric. To show that a revival is extrinsic, it is
enough to find one compatible extended LIM and one retained inert part
satisfying Eq.~\eqref{eq:extrinsic-quantifier}. To show that it is genuine, one
must rule out \textit{every} such inert explanation: see
Ref.~\cite{buscemi-2025-non-causal-backflow} for further details on this point.

We next distinguish where the explanatory inert registers lie.

\begin{definition}{Definition}(Inert explainability classes)
\label{def:explainability-classes}
Consider a revival as in Eq.~\eqref{eq:information-revival}.

It is \emph{environment-side explainable} if there exists a compatible extended
LIM with trivial $S$ such that
\begin{equation}
I(R;Q'F)_1
\geqslant
I(R;Q''F)_2.
\label{eq:F_MI}
\end{equation}

It is \emph{system-side explainable} if there exists a compatible extended LIM
with trivial $F$ such that
\begin{equation}
I(R;Q'S)_1
\geqslant
I(R;Q''S)_2.
\label{eq:S_MI}
\end{equation}

It is \emph{globally explainable} if there exists a compatible extended LIM such
that
\begin{equation}
I(R;Q'SF)_1
\geqslant
I(R;Q''SF)_2.
\label{eq:SF_MI}
\end{equation}
Here either $S$ or $F$ may be trivial.
\nobreak\hfill\ensuremath{\square}\end{definition}

By definition, then, environment-side explanations and system-side explanations
are special cases of global explanations. Hence
\begin{equation}
\text{extrinsic}
\quad\Longleftrightarrow\quad
\text{globally explainable}.
\label{eq:extrinsic-global-equivalence}
\end{equation}
As shown below in Theorem~\ref{thm:strict-hierarchy}, a joint $SF$ explanation
need not remain explanatory after either $S$ or $F$ is discarded. 
Figures~\ref{fig:system-world-compensation},
\ref{fig:environment-world-compensation}, and \ref{fig:two-sided-compensation}
in Section~\ref{sec:numerical-illustrations} numerically illustrate cases in
which the displayed system-side, environment-side, and global extensions,
respectively, explain the observed revival. Each case uses a fixed extended LIM,
and the other two displayed extensions fail to
explain the reviva.

\begin{definition}{Remark}
\label{rem:QCMI-explainability}
For any fixed compatible extended LIM, the explanation inequalities can be
written using quantum conditional mutual information. Unitary invariance and the
chain rule show that retaining $F$ explains the revival if and only if
\begin{equation}
I(R;E''|Q''F)_2
\geqslant
I(R;E'|Q'F)_1,
\end{equation}
retaining $S$ explains it if and only if
\begin{equation}
I(R;E''|Q''S)_2
\geqslant
I(R;E'|Q'S)_1,
\end{equation}
and retaining $SF$ explains it if and only if
\begin{equation}
I(R;E''|Q''SF)_2
\geqslant
I(R;E'|Q'SF)_1.
\end{equation}
\nobreak\hfill\ensuremath{\square}\end{definition}

\section{Results}
\label{sec:results}

Throughout this section, $(\rho_{RQ},\sigma_{RQ'},\tau_{RQ''})$ is an admissible
three-time process. We first compare the three existential explainability
classes, then prove the no-go result for separable probes and the exact
equivalence between commuting preparable sets and quantum--classical probes, and
finally study exact Markov screening condition in one fixed extended model.

\subsection{A V-shaped hierarchy of explanations}

The one-sided inclusions established above are both strict.

\begin{theorem}{Theorem}(V-shaped hierarchy of explanations)
\label{thm:strict-hierarchy}
System-side and environment-side explainability are incomparable:
\begin{align}
\text{System} &\;\not\Rightarrow\; \text{Environment},
&
\text{Environment} &\;\not\Rightarrow\; \text{System}.
\end{align}
The above, in turn, implies that, in general, 
\begin{align}
\text{Global} &\;\not\Rightarrow\; \text{System},
&
\text{Global} &\;\not\Rightarrow\; \text{Environment},
\label{eq:global-not-imply}
\end{align}
because, by definition, environment-side explanations and system-side
explanations are special cases of global explanations.
\nobreak\hfill\ensuremath{\square}\end{theorem}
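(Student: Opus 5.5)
The plan is to prove the two non-implications by two explicit constructions. The non-implications in Eq.~\eqref{eq:global-not-imply} then follow immediately: an example that is system-side but not environment-side explainable is globally explainable but not environment-side explainable, and symmetrically for the other. The hard part is not producing an explanation, since that is existential. It is proving that \emph{no} compatible extended LIM of the other type explains the revival. Each example therefore has to come with an obstruction that holds for every compatible model.

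\emph{Environment $\not\Rightarrow$ System.} Take $R\cong Q$ a qubit and the pure probe $\rho_{RQ}=\Phi^+_{RQ}$. Let $E$ be a uniformly random classical bit with a classical copy in $F$, so $\widetilde\omega_{EF}=\tfrac12\sum_k\ket{kk}\!\bra{kk}$. Let $U$ be the controlled unitary $\sum_k \ket{k}\!\bra{k}_E\otimes P_k$, with $P_0=I$ and $P_1=\sigma_z$, and let $V=U^\dagger$. Then $\sigma_{RQ'}$ is the dephased Bell state, so $I(R;Q')_1=1$, while $\tau_{RQ''}=\Phi^+$ gives $I(R;Q'')_2=2$; this is a revival. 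Retaining $F$ reveals the Pauli label, so the first step is invertible given $F$, and one checks $I(R;Q'F)_1=2=I(R;Q''F)_2$. The obstruction to system-side explanations is purity of the probe. Any extension $\widetilde\rho_{SRQ}$ of a pure $\rho_{RQ}$ is a product $\widetilde\rho_S\otimes\rho_{RQ}$. Hence in every compatible extended LIM with trivial $F$, $S$ stays uncorrelated with $RQ'E'$ and $RQ''E''$, giving $I(R;Q'S)_1=I(R;Q')_1<I(R;Q'')_2=I(R;Q''S)_2$. This holds for every LIM, so the example is complete.

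\emph{System $\not\Rightarrow$ Environment.} I would take the mirror construction: a mixed probe $\rho_{RQ}=\tfrac12\sum_k(I\otimes P_k)\Phi^+(I\otimes P_k)$ whose purification label sits in $S$. Choose the dynamics so that, without knowledge of $k$, the information in $Q'$ is low, and $Q''$ at $t_2$ recovers \emph{all} of $I(R;Q)_0$. For example, $U$ could swap $Q$ into a fresh pure environment qubit while depositing a $k$-dependent record, and $V$ could undo it. Retaining $S$ undoes the hidden Pauli and restores monotonicity. For the obstruction, I would use the chain-rule bound valid in every model with trivial $S$. Since $F$ is independent of $RQ$ and $V,U$ act unitarily,
\[
I(R;Q'F)_1=I(R;Q'E'F)_1-I(R;E'|Q'F)_1\leqslant I(R;QEF)_0=I(R;Q)_0 .
\]
Combined with $I(R;Q''F)_2\geqslant I(R;Q'')_2=I(R;Q)_0$, an environment-side explanation forces $I(R;Q'F)_1=I(R;Q)_0$ exactly. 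Equivalently, by Remark~\ref{rem:QCMI-explainability}, it forces $I(R;E'|Q'F)_1=0$, meaning that $Q'F$ alone carries all information about $R$.

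The main obstacle is excluding this equality case for every compatible LIM, because $E'$, and hence $F$, may become correlated with $R$ through $U$. I would first try to pick the triple so that the equality contradicts the fixed marginal $\sigma_{RQ'}$. Concretely, the goal is that $I(R;Q'F)_1=I(R;Q)_0$ would imply, via the equality case of data processing (Petz recovery from $Q'F$), that $\sigma_{RQ'}$ is obtainable from an initially independent $F$. That is impossible when $F$ carries no initial information about $R$ and $\sigma_{RQ'}$ has strictly less information than the probe, since conditioning on an independent system cannot restore lost correlations with $R$. Making this step rigorous, for instance by showing that $I(R;F|Q')_1\leqslant I(R;E'|Q')_1$ is too small for the chosen $\sigma_{RQ'}$, is where I expect the real work to lie. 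If it fails, the fallback is to choose a triple in which $Q'$ is fully decoupled from $R$ and then bound $I(R;Q'F)_1$ via the structure of the classical-label probe.
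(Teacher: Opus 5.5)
Your Environment $\not\Rightarrow$ System half is correct and uses the same obstruction as the paper: purity of $\Phi^+_{RQ}$ forces every system-side extension to be a product. Your controlled-$\sigma_z$ model, with $F$ copying the label, is a valid explicit instance ($I(R;Q'F)_1=2=I(R;Q''F)_2$), and your deduction of Eq.~\eqref{eq:global-not-imply} is fine. The System $\not\Rightarrow$ Environment half, however, is not a proof.

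\textbf{The system-side explanation is not checked, and as described it fails.} Suppose $S$ stores the Pauli label $k$ of $\rho_{RQ}=\tfrac12\sum_k(I\otimes P_k)\Phi^+(I\otimes P_k)$. Then $I(R;S)_0=0$ but $I(R;Q|S)_0=2$. Under swap-out/swap-back dynamics this gives $I(R;Q'S)_1=I(R;S)_0=0<2=I(R;QS)_0=I(R;Q''S)_2$, so knowing the Pauli enlarges the revival. Nor can $U$ deposit a $k$-dependent record: $\widetilde\rho_{SQ}$ is a product and $U$ never touches $R$. What you need is a screening register with $I(R;Q|S)_0=0$. Your probe equals $\tfrac12\sum_j\ket{jj}\!\bra{jj}_{RQ}$, and an $S$ copying $j$ gives $I(R;Q'S)_1=I(R;S)_0=1=I(R;QS)_0=I(R;Q''S)_2$. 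Theorem~\ref{thm:system-screening} guarantees this in general.

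\textbf{Excluding every environment-side explanation is left open.} The principle your recovery-map step relies on is that conditioning on an initially independent $F$ cannot restore correlations with $R$ that $Q'$ has lost. This is false, and your own first example refutes it: $I(R;Q')_1=1<I(R;Q)_0=2$, yet $I(R;Q'F)_1=2$.

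Your fallback of merely decoupling $Q'$ from $R$ does not suffice either. Take the same classical probe, a uniformly random key in $E$ copied into $F$, and apply the key to $Q$ by a CNOT at both steps. This yields $\sigma_{RQ'}=\tfrac{I}{2}\otimes\tfrac{I}{2}$ and a revival from $0$ to $1$ bit that $F$ explains, since $I(R;Q'F)_1=1=I(R;Q''F)_2$.

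The missing idea is purity of the intermediate system state. Take $U$ to be a plain swap with $E$ prepared in $\ket{0}$, with no extra record, and $V$ the swap back. Then $\sigma_{RQ'}=\rho_R\otimes\ket{0}\!\bra{0}_{Q'}$ and $\tau_{RQ''}=\rho_{RQ}$. A pure marginal is product with every extension. Since the $RF$ marginal is never acted on and starts out product, every compatible extended LIM has $I(R;Q'F)_1=I(R;F)_1=I(R;F)_0=0$. Meanwhile $I(R;Q''F)_2\geqslant I(R;Q'')_2=I(R;Q)_0>0$. This is the paper's argument, and it makes both your reduction to the equality case and the Petz-recovery analysis unnecessary.
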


\noindent{\it Proof.}
We give two examples: the first is environment-side explainable but not
system-side explainable; the second is system-side explainable but not
environment-side explainable.

For the first example, Ref.~\cite{buscemi-2025-non-causal-backflow} already
provides an environment-side explainable revival that starts from a pure (in
fact, maximally entangled) $\rho_{RQ}$. For any such process, every system-side
inert extension is a product extension,
\[
\widetilde{\rho}_{RQS}
=
\rho_{RQ} \otimes \zeta_S,
\]
for some state $\zeta_S$. Because $S$ is inert, it stays product with the
reduced reference--system state. Hence
\[
I(R;Q'S)_1 = I(R;Q')_1,
\qquad
I(R;Q''S)_2 = I(R;Q'')_2.
\]
The reduced revival therefore remains a revival after adding any such $S$: that
is, no system-side explanation exists.

For the second example, choose a correlated separable state $\rho_{RQ}$, so that
$I(R;Q)_0>0$. It is then easy to see that a register $S$ exists such that
\[
I(R;Q|S)_0=0.
\]
Now, let us define
\[
\sigma_{RQ'}
=
\rho_R\otimes\ket{0}\!\bra{0}_{Q'},
\qquad
\tau_{RQ''}
=
\rho_{RQ}.
\]
This triple has a compatible LIM: take $E\cong Q$ in the pure state
$\ket{0}\!\bra{0}_E$, swap $Q$ and $E$ at the first interaction, and swap them
back at the second. The reduced process has the revival
\[
I(R;Q')_1
=
0
<
I(R;Q'')_2
=
I(R;Q)_0.
\]
The register $S$ explains it because
\[
I(R;Q'S)_1
=
I(R;S)_0
=
I(R;QS)_0
=
I(R;Q''S)_2.
\]

We now exclude all environment-side explanations. Consider the inert register
$F$ in any compatible extended LIM. Initial factorization gives $I(R;F)_0=0$,
and the first interaction leaves the $RF$ marginal unchanged. Since the
prescribed state of $Q'$ is pure, $Q'$ is product with every extension of
$\sigma_{RQ'}$. Therefore
\[
I(R;Q'F)_1
=
I(R;F)_1
=
0.
\]
At the final time, data processing under the partial trace over $F$ gives
\[
I(R;Q''F)_2
\geqslant
I(R;Q'')_2
=
I(R;Q)_0
>
0.
\]
Thus no environment-side explanation exists.
\ensuremath{\blacksquare}%

Thus a revival may appear a genuine backflow under a one-sided explanation,
while becoming extrinsic as soon as global explanations are allowed.

\begin{definition}{Remark}
Purity limits possible inert extensions inside a fixed model. For example, if
the initial environment state of a chosen LIM is pure, every environment-side
extension of that state is product, so any nontrivial inert explanation within
that fixed model must come from the system side. If $\rho_{RQ}$ is also pure,
every system-side extension is product as well. Hence, within a model that is
globally pure (a sort of perfectly isolated universe),
\[
I(R;Q'SF)_1=I(R;Q')_1,
\qquad
I(R;Q''SF)_2=I(R;Q'')_2.
\]
Any reduced revival therefore is genuine. However, this fixed-model statement
does not rule out an inert explanation arising from a \textit{different}
compatible extended LIM.
\nobreak\hfill\ensuremath{\square}\end{definition}

\subsection{Quantumness of the initial preparations is necessary for genuine
backflow}
\label{subsec:initial-entanglement}

We now show that genuine backflow requires quantumness in the initial
preparations. At the operational level, quantumness has a precise and elementary
meaning: the preparable system states cannot all commute. At the probe level,
the corresponding necessary condition is entanglement of the reference--system
state. The probe-level result follows directly from separability. The label of
any separable decomposition of $\rho_{RQ}$ can be stored in a classical inert
system $S$, producing an extension such that
\[
I(R;Q|S)_0=0.
\]
The initial correlations between $R$ and $Q$ are then screened by $S$, and the
same extension provides a system-side explanation of every subsequent revival,
independently of the compatible interaction model.

\begin{theorem}{Theorem}(Every revival from a separable probe is extrinsic)
\label{thm:system-screening}
Consider an admissible triple $(\rho_{RQ},\sigma_{RQ'},\tau_{RQ''})$. If
$\rho_{RQ}$ is separable, then any information revival of the reduced process is
system-side explainable. In particular, it is globally explainable, and hence
extrinsic.
\nobreak\hfill\ensuremath{\square}\end{theorem}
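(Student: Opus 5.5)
The plan is to build, from a separable decomposition of the probe, a classical inert register $S$ that screens the initial reference--system correlations. We then show that this screening persists at both later times, because the dynamics acts only on $Q$ and a product environment. Concretely, write $\rho_{RQ}=\sum_{k} p_k\,\alpha^{(k)}_R\otimes\beta^{(k)}_Q$ with finitely many terms; finiteness holds by Carathéodory in finite dimension. Define
\[
\widetilde{\rho}_{SRQ}=\sum_k p_k\,\ket{k}\!\bra{k}_S\otimes\alpha^{(k)}_R\otimes\beta^{(k)}_Q ,
\]
which satisfies $\operatorname{Tr}_S\widetilde{\rho}_{SRQ}=\rho_{RQ}$. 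Take any compatible LIM $(E,\omega_E,U,V)$ of the triple, which exists by admissibility, and let $F$ be trivial. This is a compatible extended LIM in the sense of Definition~\ref{def:eLIM}, and by Eq.~\eqref{eq:eLIM-recovers-LIM} it reproduces $\sigma_{RQ'}$ and $\tau_{RQ''}$.

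The key computation is that at every time the extended state remains a classical--quantum mixture that is product between $R$ and the system conditioned on $k$. Since $U$ and $V$ act only on $QE$ and $\omega_E$ is independent of $SRQ$, we get
\[
\widetilde{\sigma}_{SRQ'}=\sum_k p_k\,\ket{k}\!\bra{k}_S\otimes\alpha^{(k)}_R\otimes\mathcal{N}_{t_0\to t_1}(\beta^{(k)}_Q),
\]
and similarly $\widetilde{\tau}_{SRQ''}$ with $\mathcal{N}_{t_0\to t_2}$. Here $\mathcal{N}$ is the reduced channel $\beta\mapsto \operatorname{Tr}_{E'}[U(\beta\otimes\omega_E)U^\dagger]$. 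Hence $I(R;Q'|S)_1=0=I(R;Q''|S)_2$, because conditioned on each classical value $k$ the state on $RQ'$ (respectively $RQ''$) is a product state.

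Applying the chain rule then gives
\[
I(R;Q'S)_1=I(R;S)_1+I(R;Q'|S)_1=I(R;S)_1 ,
\]
and likewise $I(R;Q''S)_2=I(R;S)_2$. Since $R$ and $S$ are both inert, their joint marginal $\widetilde{\rho}_{RS}=\sum_k p_k\ket{k}\!\bra{k}\otimes\alpha^{(k)}_R$ is unchanged at all three times. So $I(R;Q'S)_1=I(R;S)_0=I(R;Q''S)_2$, and Eq.~\eqref{eq:S_MI} holds with equality. This proves system-side explainability, and then global explainability and extrinsicness follow from Definition~\ref{def:explainability-classes} and Eq.~\eqref{eq:extrinsic-global-equivalence}.

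The only step needing care is verifying that the extended evolution really factorizes term by term as displayed. This relies on the product initial condition $\widetilde{\rho}_{SRQ}\otimes\omega_E$ and on linearity of the unitary conjugation. The same argument also shows that the explanation is independent of which compatible LIM is chosen, as the statement suggests.
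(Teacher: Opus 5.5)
Your proof is correct and is essentially the paper's argument. It uses the same classical inert register $S$ recording the label of a finite separable decomposition, takes $F$ trivial, and observes that the screening $I(R;\,\cdot\,|S)=0$ survives the $QE$ dynamics. The only difference is bookkeeping: the paper obtains $I(R;E'|Q'S)_1=0$ from $I(R;Q'E'|S)_1=I(R;Q|S)_0=0$ and then invokes the QCMI criterion of Remark~\ref{rem:QCMI-explainability}, whereas you compute the conditional states explicitly to get $I(R;Q'|S)_1=I(R;Q''|S)_2=0$, and hence the equality $I(R;Q'S)_1=I(R;S)_0=I(R;Q''S)_2$ directly.
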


\noindent{\it Proof.}
Write the separable state as
\begin{equation}
\rho_{RQ}
=
\sum_x p_x \rho_R^x \otimes \rho_Q^x .
\end{equation}
Add a classical inert register $S$ that stores the decomposition label:
\begin{equation}
\widetilde{\rho}_{SRQ}
=
\sum_x p_x
\ket{x}\!\bra{x}_S
\otimes
\rho_R^x
\otimes
\rho_Q^x .
\end{equation}
This state extends $\rho_{RQ}$. Conditioned on $S=x$, the state of $RQ$ is
product, so
\begin{equation}
I(R;Q|S)_0
=
\sum_x p_x I(R;Q)_{\rho_R^x\otimes\rho_Q^x}
=
0 ,
\label{eq:separable-screening}
\end{equation}
and $S$ screens the initial $R$--$Q$ correlations.

Fix any LIM compatible with the reduced process. With trivial $F$, its initial
extended state is
\begin{equation}
\widetilde{\rho}_{SRQ}\otimes\omega_E ,
\end{equation}
and its later states are $\widetilde{\sigma}_{SRQ'E'}$ and
$\widetilde{\tau}_{SRQ''E''}$.

The first interaction is unitary on $QE$ and leaves $R$ and $S$ inert, hence
\begin{equation}
I(R;QE|S)_0
=
I(R;Q'E'|S)_1 .
\label{eq:screening-unitary-invariance}
\end{equation}
At $t_0$, the environment is independent of $SRQ$, so
\begin{equation}
I(R;QE|S)_0
=
I(R;Q|S)_0 .
\label{eq:screening-environment-reduction}
\end{equation}
Together with Eq.~\eqref{eq:separable-screening}, these identities give
\begin{equation}
I(R;Q'E'|S)_1=0 .
\end{equation}
By the chain rule,
\begin{equation}
I(R;Q'E'|S)_1
=
I(R;Q'|S)_1
+
I(R;E'|Q'S)_1 .
\end{equation}
Both terms are nonnegative, and therefore
\begin{equation}
I(R;E'|Q'S)_1=0 .
\label{eq:separable-decoupling}
\end{equation}
Nonnegativity also gives
\begin{equation}
I(R;E''|Q''S)_2
\geqslant
I(R;E'|Q'S)_1 .
\end{equation}
The QCMI criterion in Remark~\ref{rem:QCMI-explainability} now shows that the
revival is system-side explainable. Taking $F$ to be trivial turns the same
system-side explanation into a global explanation, so the revival is extrinsic.
\ensuremath{\blacksquare}%

The converse viewpoint is related to squashed
entanglement~\cite{christandl-winter-2004-squashed}, defined, up to a
conventional factor, by the infimum of $I(R;Q|S)$ over all extensions $S$. No
such optimization is needed for the implication used here: a separable
decomposition directly supplies a finite classical extension satisfying
$I(R;Q|S)_0=0$. Although this analogy does not establish a formal complexity
result, the optimization over arbitrary extensions suggests that finding
explanatory extensions in general extended LIMs may be computationally
demanding. Figure~\ref{fig:entanglement-revival-landscape} in
Section~\ref{sec:numerical-illustrations} numerically illustrates
Theorem~\ref{thm:system-screening}. The figure also shows how noncommuting
initial preparations, although necessary, are not sufficient for genuine
backflow, as will be discussed below.

We now pass from the probe-level result to the preparable set itself. The
following lemma gives an exact correspondence between two notions of
classicality: a probe is quantum--classical on $Q$ if and only if all the states
that it prepares on $Q$ commute. Thus, commuting preparable sets do not merely
admit a convenient quantum--classical representation; every probe generating
such a set must be quantum--classical.

\begin{theorem}{Lemma}(Commuting preparable sets characterize
quantum--classical probes)
\label{lem:commuting-qc}
Let $\rho_{RQ}$ be a probe. Its steered set $\mathcal{S}_Q$ consists of mutually
commuting states if and only if, in some orthonormal basis $\{\ket{i}_Q\}$,
\begin{equation}
\rho_{RQ}
=
\sum_i p_i\,\tau_R^i\otimes\ket{i}\!\bra{i}_Q,
\label{eq:qc-probe}
\end{equation}
where $\{p_i\}$ is a probability distribution and 
$\tau_R^i\in\mathsf{S}(\mathcal{H}_R)$ for
every $i$ with $p_i>0$. In particular, $\rho_{RQ}$ is separable.
\nobreak\hfill\ensuremath{\square}\end{theorem}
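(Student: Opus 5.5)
The "if" direction is a direct computation, and the "only if" direction reduces to a linear-algebra argument about the span of the steered operators.

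\emph{If direction.} Assume $\rho_{RQ}$ has the form of Eq.~\eqref{eq:qc-probe}. For any $M_R\geqslant 0$,
\[
\operatorname{Tr}_R[(M_R\otimes I_Q)\rho_{RQ}]=\sum_i p_i\operatorname{Tr}[M_R\tau_R^i]\,\ket{i}\!\bra{i}_Q ,
\]
which is diagonal in $\{\ket{i}_Q\}$. Hence every normalized steered state is diagonal in this basis, and all of them commute. Separability is immediate from the form, since each term is a product state.

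\emph{Only if direction.} Here the key observation is linearity. Every $X_R\in\mathsf{L}(\mathcal{H}_R)$ is a complex linear combination of positive operators, and $\operatorname{Tr}_R[(X_R\otimes I_Q)\rho_{RQ}]$ is linear in $X_R$. Therefore the complex linear span $\mathcal{A}$ of the unnormalized steered operators equals the span of the normalized states in $\mathcal{S}_Q$. Dropping zero-trace outputs causes no problem: if $M_R\geqslant0$ gives trace zero, then the output is positive with zero trace and hence vanishes.

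By hypothesis these states commute pairwise. They are also Hermitian, so the set is a commuting family of normal operators and is simultaneously diagonalizable in one orthonormal basis $\{\ket{i}_Q\}$. Consequently every element of $\mathcal{A}$ is diagonal in that basis.

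Next I would expand the probe as
\[
\rho_{RQ}=\sum_{i,j} A^{ij}_R\otimes\ket{i}\!\bra{j}_Q .
\]
Choosing $X_R=\ket{b}\!\bra{a}_R$ gives
\[
\operatorname{Tr}_R[(X_R\otimes I)\rho_{RQ}]=\sum_{i,j}\bra{a}A^{ij}\ket{b}\,\ket{i}\!\bra{j} ,
\]
and this must be diagonal for all $a,b$. Hence $A^{ij}_R=0$ whenever $i\neq j$, so $\rho_{RQ}=\sum_i A^{ii}_R\otimes\ket{i}\!\bra{i}_Q$. Positivity of $\rho_{RQ}$ then gives $A^{ii}_R\geqslant0$, because each is a compression $(I\otimes\bra{i})\rho_{RQ}(I\otimes\ket{i})$. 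Setting $p_i=\operatorname{Tr}A^{ii}$, and $\tau_R^i=A^{ii}/p_i$ whenever $p_i>0$, yields Eq.~\eqref{eq:qc-probe}. The values $p_i$ sum to one by normalization.

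The main step to get right is the passage from commutation of the normalized positive steered states to diagonality of the whole complex span. Once that is secured, the off-diagonal blocks vanish by elementary testing with the rank-one operators $\ket{b}\!\bra{a}_R$. Simultaneous diagonalization of a possibly infinite commuting Hermitian family is standard in finite dimensions: reduce to a finite spanning subset, since $\mathcal{A}$ is finite-dimensional.
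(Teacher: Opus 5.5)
Your proposal is correct and follows essentially the same route as the paper. The ``if'' direction is the same direct computation. For ``only if'', you use the same steps: simultaneous diagonalization, extending diagonality from positive operators to all of $\mathsf{L}(\mathcal{H}_R)$ by linearity (including the zero-trace case), and then forcing the off-diagonal blocks $A_{ij}$ to vanish. Testing with the rank-one operators $\ket{b}\!\bra{a}_R$ is just a concrete instance of the paper's condition $\operatorname{Tr}[XA_{ij}]=0$ for all $X$.
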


\noindent{\it Proof.}
Suppose first that Eq.~\eqref{eq:qc-probe} holds. For every 
$M_R\in\mathsf{P}(\mathcal{H}_R)$,
the corresponding unnormalized steered operator is
\begin{equation}
\operatorname{Tr}_R\!\left[(M_R\otimes I_Q)\rho_{RQ}\right]
=
\sum_i p_i\operatorname{Tr}[M_R\tau_R^i]\ket{i}\!\bra{i}_Q.
\end{equation}
Every normalized steered state is therefore diagonal in the basis
$\{\ket{i}_Q\}$, so the states in $\mathcal{S}_Q$ mutually commute.

Conversely, suppose that the states in $\mathcal{S}_Q$ mutually 
commute. Since they are
Hermitian operators on a finite-dimensional space, they are simultaneously
diagonalizable. Fix an orthonormal basis $\{\ket{i}_Q\}$ in which every state in
$\mathcal{S}_Q$ is diagonal. For $M_R\in\mathsf{P}(\mathcal{H}_R)$, define
\begin{equation}
\widetilde{\sigma}_Q(M_R)
:=
\operatorname{Tr}_R\!\left[(M_R\otimes I_Q)\rho_{RQ}\right].
\end{equation}
This operator is non-negative. If $\operatorname{Tr}[M_R\rho_R]>0$, 
its normalization belongs
to $\mathcal{S}_Q$, so $\widetilde{\sigma}_Q(M_R)$ is diagonal. If 
$\operatorname{Tr}[M_R\rho_R]=0$,
then $\widetilde{\sigma}_Q(M_R)$ is non-negative and has vanishing trace, and
hence is zero. Thus, $\widetilde{\sigma}_Q(M_R)$ is diagonal for every
$M_R\in\mathsf{P}(\mathcal{H}_R)$.

Positive semidefinite operators span the whole space $\mathsf{L}
(\mathcal{H}_R)$. Since the map
\begin{equation}
X\longmapsto\operatorname{Tr}_R\!\left[(X\otimes I_Q)\rho_{RQ}\right]
\end{equation}
is linear and the diagonal operators form a complex linear subspace, its image
is diagonal for every $X\in\mathsf{L}(\mathcal{H}_R)$. Writing
\begin{equation}
\rho_{RQ}
=
\sum_{i,j}A_{ij}\otimes\ket{i}\!\bra{j}_Q,
\qquad
A_{ij}:=(I_R\otimes\bra{i}_Q)\rho_{RQ}(I_R\otimes\ket{j}_Q),
\end{equation}
we obtain, for $i\neq j$,
\begin{equation}
0
=
\bra{i}\operatorname{Tr}_R\!\left[(X\otimes I_Q)\rho_{RQ}\right]\ket{j}
=
\operatorname{Tr}[XA_{ij}]
\end{equation}
for every $X\in\mathsf{L}(\mathcal{H}_R)$. This is equivalent to 
$A_{ij}=0$ whenever $i\neq j$.
Hence
\begin{equation}
\rho_{RQ}
=
\sum_i A_{ii}\otimes\ket{i}\!\bra{i}_Q.
\end{equation}
Each $A_{ii}$ is positive and $\sum_i\operatorname{Tr}[A_{ii}]=1$. 
Setting $p_i:=\operatorname{Tr}[A_{ii}]$
and $\tau_R^i:=A_{ii}/p_i$ whenever $p_i>0$, and omitting the terms with
$p_i=0$, gives Eq.~\eqref{eq:qc-probe}. This form is manifestly separable.
\ensuremath{\blacksquare}%

\begin{theorem}{Corollary}(No genuine backflow from classical preparations)
\label{cor:classical-no-backflow}
Let the preparable set $\mathcal{S}_Q$ consist of mutually commuting 
states. Then no
information revival of the reduced process is a genuine backflow: every revival
is system-side explainable, and hence extrinsic.
\nobreak\hfill\ensuremath{\square}\end{theorem}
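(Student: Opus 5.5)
The plan is to chain Lemma~\ref{lem:commuting-qc} with Theorem~\ref{thm:system-screening}; no new estimates should be needed. The hypothesis concerns only the steered set $\mathcal{S}_Q$. By Remark~\ref{rem:probe}, however, the explainability statements are made relative to a fixed probe $\rho_{RQ}$ that generates this set. So the first step is to fix an arbitrary probe $\rho_{RQ}$ whose steered set is the given commuting $\mathcal{S}_Q$. I must then make sure that the conclusion does not depend on which probe was chosen.

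\textbf{Step 1.} Apply the ``only if'' direction of Lemma~\ref{lem:commuting-qc}. Since every state in $\mathcal{S}_Q$ commutes with every other, the probe has the quantum--classical form
\[
\rho_{RQ}=\sum_i p_i\,\tau_R^i\otimes\ket{i}\!\bra{i}_Q
\]
in some orthonormal basis of $Q$. In particular, $\rho_{RQ}$ is separable. Because the probe was arbitrary, this holds for \emph{every} probe that represents $\mathcal{S}_Q$. That is exactly why the corollary can be stated purely in terms of the preparable set.

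\textbf{Step 2.} Invoke Theorem~\ref{thm:system-screening} for the admissible triple $(\rho_{RQ},\sigma_{RQ'},\tau_{RQ''})$. For this quantum--classical decomposition, the screening register can be taken explicitly as $\widetilde{\rho}_{SRQ}=\sum_i p_i\ket{i}\!\bra{i}_S\otimes\tau_R^i\otimes\ket{i}\!\bra{i}_Q$, which is simply a classical copy of the $Q$ basis label. This gives $I(R;Q|S)_0=0$. The theorem then yields, for any compatible LIM with trivial $F$, that $I(R;E'|Q'S)_1=0\leqslant I(R;E''|Q''S)_2$. By Remark~\ref{rem:QCMI-explainability}, this means every revival is system-side explainable. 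It is therefore globally explainable and, by Eq.~\eqref{eq:extrinsic-global-equivalence}, extrinsic. By Definition~\ref{def:extrinsic-genuine-revival}, an extrinsic revival is not a genuine backflow.

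\textbf{Main obstacle.} The only delicate point is the probe dependence just described. A given $\mathcal{S}_Q$ may be generated by several probes, and genuineness is defined relative to the fixed one. The argument closes this gap because Lemma~\ref{lem:commuting-qc} is an equivalence that constrains every probe generating a commuting set, not merely some convenient one. I would state this explicitly in the proof. A secondary check is that the triple is admissible, so that a compatible LIM exists to which Theorem~\ref{thm:system-screening} applies. This is part of the standing assumption of the section, so nothing further is needed there.
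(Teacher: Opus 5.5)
Your proposal is correct and matches the paper's proof: for any probe generating the commuting set $\mathcal{S}_Q$, Lemma~\ref{lem:commuting-qc} gives the quantum--classical form, hence separability, and Theorem~\ref{thm:system-screening} then yields a system-side explanation of every revival. Your explicit register $S$, a classical copy of the $Q$ basis label, is just the separable decomposition used in that theorem's proof, and your point about probe-independence corresponds to the paper's ``for any probe'' wording.
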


\noindent{\it Proof.}
For any probe generating $\mathcal{S}_Q$, Lemma~\ref{lem:commuting-qc} 
implies that the
probe is quantum--classical, and hence separable.
Theorem~\ref{thm:system-screening} therefore applies.
\ensuremath{\blacksquare}%

Summarizing, we thus obtain, under the global-explanation convention,
\begin{equation}
\begin{array}{r@{\quad}c@{\quad}c}
\text{genuine backflow}
& \Longrightarrow &
\rho_{RQ}\text{ entangled}
\\[0.2ex]
&& \Downarrow
\\[0.2ex]
\mathcal{S}_Q\text{ noncommuting}
& \Longleftrightarrow &
\substack{
\rho_{RQ}\text{ not quantum--classical}\\[-0.2ex]
\text{on }Q
}
\end{array}
\end{equation}
Thus, noncommutativity of the preparable system states is necessary, although
not sufficient, for genuine backflow.

\begin{definition}{Remark}
\label{rem:initial-entanglement-interpretation}
The no-go result hinges upon the assumption that global explanations are
acceptable. A different operational convention may forbid system-side extensions
because, e.g., the preparation is taken to be fully specified or because $Q$
already includes all controlled degrees of freedom. In such a case, a separable
probe could still support a revival that has no allowed environment-side
explanation.
\nobreak\hfill\ensuremath{\square}\end{definition}

\subsection{An asymmetry under fixed inert extensions}
\label{subsec:asymmetric-inert-conditioning}

The explainability classes above are existential and revival-specific: the
compatible extended LIM and the retained inert registers may be chosen for the
revival under consideration. We now fix one extended LIM, with its given $S$ and
$F$, and impose a stronger condition at the intermediate time.

We say that $S$ gives \emph{exact system-side Markov screening} at $t_1$ if
\begin{equation}
I(R;E'|Q'S)_1=0,
\label{eq:system-side-markov-screening}
\end{equation}
and that $F$ gives \emph{exact environment-side Markov screening} if
\begin{equation}
I(R;E'|Q'F)_1=0.
\label{eq:environment-side-markov-screening}
\end{equation}
Vanishing conditional mutual information is the quantum Markov condition: once
$Q'$ and the retained register are known, $E'$ contains no further information
about $R$~\cite{hayden2004structure}. This is stronger than explaining one
selected revival. If, for example, Eq.~\eqref{eq:system-side-markov-screening}
holds, then
\[
I(R;E''|Q''S)_2\geqslant0=I(R;E'|Q'S)_1
\]
for every later unitary on $Q'E'$. Hence the same $S$ explains every later
revival from $t_1$.

We ask whether exact screening survives when the already fixed register on the
other side is also retained. The answer is asymmetric.

\begin{theorem}{Theorem}(Asymmetric stability of exact Markov screening)
\label{thm:asymmetric-conditioning}
For every compatible extended LIM, the intermediate state satisfies
\begin{equation}
I(R;E'|Q'SF)_1
=
I(R;E'|Q'S)_1
-
I(R;F|Q'S)_1,
\label{eq:system-to-two-sided-identity}
\end{equation}
and
\begin{equation}
I(R;E'|Q'SF)_1
=
I(R;E'|Q'F)_1
+
I(R;S|Q)_0
-
I(R;S|Q'F)_1.
\label{eq:environment-to-two-sided-identity}
\end{equation}
Consequently,
\begin{equation}
I(R;E'|Q'S)_1=0
\quad\Longrightarrow\quad
I(R;E'|Q'SF)_1=0,
\label{eq:system-decoupling-lifts}
\end{equation}
whereas $I(R;E'|Q'F)_1=0$ does not, in general, imply $I(R;E'|Q'SF)_1=0$ for the
same fixed extension.
\nobreak\hfill\ensuremath{\square}\end{theorem}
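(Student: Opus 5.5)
The two identities \eqref{eq:system-to-two-sided-identity} and \eqref{eq:environment-to-two-sided-identity} should follow from expanding a single four-party conditional mutual information with the chain rule in two different orders. One of the resulting terms will then be evaluated using two facts. First, the initial state $\widetilde{\rho}_{SRQ}\otimes\widetilde{\omega}_{EF}$ is a product across the cut. Second, the first interaction is a unitary on $QE$ alone, so every quantity of the form $I(A;B|C\,Q'E')_1$ with $A,B,C\subseteq\{R,S,F\}$ equals the corresponding $I(A;B|C\,QE)_0$. The implication \eqref{eq:system-decoupling-lifts} is then immediate. The failure of the converse direction needs one explicit extended LIM.

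For \eqref{eq:system-to-two-sided-identity}, I expand $I(R;E'F|Q'S)_1$ in two ways:
\[
I(R;F|Q'S)_1+I(R;E'|Q'SF)_1=I(R;E'|Q'S)_1+I(R;F|Q'SE')_1 .
\]
It then suffices to show $I(R;F|Q'SE')_1=0$. By unitary invariance this equals $I(R;F|SQE)_0$. Nonnegativity and the chain rule give
\[
I(R;F|SQE)_0\leqslant I(RSQ;F|E)_0=I(RSQ;EF)_0-I(RSQ;E)_0=0,
\]
where the last equality uses that $EF$ is product with $SRQ$ at $t_0$.

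For \eqref{eq:environment-to-two-sided-identity}, I expand $I(R;E'S|Q'F)_1$ in two ways:
\[
I(R;S|Q'F)_1+I(R;E'|Q'SF)_1=I(R;E'|Q'F)_1+I(R;S|Q'E'F)_1 .
\]
Unitary invariance gives $I(R;S|Q'E'F)_1=I(R;S|QEF)_0$. This equals $I(R;S|Q)_0$, because conditioning on a system $EF$ that is independent of $SRQ$ does not change the conditional mutual information. Rearranging gives the stated identity.

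The implication \eqref{eq:system-decoupling-lifts} then follows from \eqref{eq:system-to-two-sided-identity}: if $I(R;E'|Q'S)_1=0$, the right-hand side is $-I(R;F|Q'S)_1\leqslant 0$. Since the left-hand side is nonnegative, both sides vanish.

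The main obstacle is the counterexample. By \eqref{eq:environment-to-two-sided-identity}, it is enough to have $I(R;E'|Q'F)_1=0$ together with $I(R;S|Q)_0>I(R;S|Q'F)_1$. In words, $S$ must be correlated with $R$ given $Q$ at $t_0$, but the dynamics must move $Q$'s information into $E'$, so that it is no longer available in $Q'F$. I take trivial $F$ and classical bits $a,b$ uniform and independent. Set $R=a\oplus b$, $Q=b$ and $S=a$. Then $\rho_{RQ}$ is a product of two maximally mixed bits, and $\widetilde{\rho}_{SRQ}$ is a legitimate extension of it. Let $E$ be a qubit in $\ket{0}$, and let $U$ swap $Q$ and $E$; any $V$ will do. At $t_1$, $Q'$ is pure and $E'$ holds $b$. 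Hence
\[
I(R;E'|Q')_1=I(a\oplus b;b)=0,
\qquad
I(R;S|Q')_1=I(a\oplus b;a)=0,
\qquad
I(R;S|Q)_0=I(a\oplus b;a|b)=1 .
\]
Substituting into \eqref{eq:environment-to-two-sided-identity} gives $I(R;E'|Q'S)_1=1\neq 0$. The direct check $I(a\oplus b;b|a)=1$ agrees. Thus exact environment-side screening holds while two-sided screening fails for the same fixed extension.
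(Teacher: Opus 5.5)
Your proposal is correct and follows the paper's proof essentially step for step: you use the same two-way chain-rule expansions of $I(R;E'F|Q'S)_1$ and $I(R;E'S|Q'F)_1$, the same unitary invariance and product structure across the cut to evaluate the extra term, and the same nonnegativity argument for the implication. Your counterexample differs from the paper's only in detail. You use an XOR triple with $E$ pure and $F$ trivial, which gives $I(R;E'|Q'S)_1=1$ bit; the paper uses a four-point classical distribution with maximally mixed, independent $E$ and $F$, which gives $\frac{3}{4}\log 3-\log 2$. Both rest on the same swap mechanism.
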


The first identity, Eq.~\eqref{eq:system-to-two-sided-identity}, has a definite
sign: adding $F$ can only decrease the system-side screening quantity. The
second, Eq.~\eqref{eq:environment-to-two-sided-identity}, contains the residual
difference $I(R;S|Q)_0-I(R;S|Q'F)_1$, which has no fixed sign. This is the
source of the asymmetry.

\noindent{\it Proof.}
Expand $I(R;E'F|Q'S)_1$ in the two possible ways:
\begin{align}
I(R;F|Q'S)_1
+
I(R;E'|Q'SF)_1
&=
I(R;E'|Q'S)_1
+
I(R;F|Q'E'S)_1.
\label{eq:system-chain-comparison}
\end{align}
Unitary invariance under the first interaction gives
\[
I(R;F|Q'E'S)_1=I(R;F|QES)_0.
\]
The initial state factorizes across the enlarged system--environment cut.
Therefore
\[
I(R;F|QES)_0=0.
\]
This proves Eq.~\eqref{eq:system-to-two-sided-identity}. Nonnegativity then
gives
\[
0\leqslant I(R;E'|Q'SF)_1\leqslant I(R;E'|Q'S)_1,
\]
which proves Eq.~\eqref{eq:system-decoupling-lifts}.

For the other identity, expand $I(R;E'S|Q'F)_1$ again in two possible ways:
\begin{align}
I(R;S|Q'F)_1
+
I(R;E'|Q'SF)_1
&=
I(R;E'|Q'F)_1
+
I(R;S|Q'E'F)_1.
\label{eq:environment-chain-comparison}
\end{align}
Again, unitary invariance and the initial product structure give
\[
I(R;S|Q'E'F)_1
=
I(R;S|QEF)_0
=
I(R;S|Q)_0,
\]
which proves Eq.~\eqref{eq:environment-to-two-sided-identity}.
Example~\ref{ex:environment-conditioning-instability} below shows that the
residual term can make $I(R;E'|Q'SF)_1$ strictly positive even when
$I(R;E'|Q'F)_1=0$.
\ensuremath{\blacksquare}%

Equation~\eqref{eq:system-decoupling-lifts} is stronger than the existential
implication System $\Rightarrow$ Global. It holds in the same fixed model, with
its possibly nontrivial $F$, and it implies that the given joint $SF$ extension
explains every later revival from $t_1$.

Preparation screening gives a simple initial condition for this robust property.

\begin{theorem}{Corollary}(Preparation screening implies robust
system-side Markov screening)
\label{cor:system-side-screening-criterion}
Let $\widetilde{\rho}_{SRQ}$ be an inert system-side extension satisfying
\begin{equation}
I(R;Q|S)_0=0.
\label{eq:general-system-screening}
\end{equation}
Then every compatible environment-side extension $\widetilde{\omega}_{EF}$ gives
a fixed extended LIM satisfying
\begin{equation}
I(R;E'|Q'SF)_1=0.
\label{eq:screening-two-sided-zero}
\end{equation}
Thus the given joint $SF$ extension explains every later revival within that
model. In particular, taking $F$ trivial gives a system-side explanation.
\nobreak\hfill\ensuremath{\square}\end{theorem}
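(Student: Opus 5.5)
The plan is to reduce Corollary~\ref{cor:system-side-screening-criterion} to the first identity of Theorem~\ref{thm:asymmetric-conditioning}, exactly as the proof of Theorem~\ref{thm:system-screening} did for trivial $F$. First I will show that the hypothesis $I(R;Q|S)_0=0$ forces exact system-side Markov screening, $I(R;E'|Q'S)_1=0$, in \emph{every} compatible extended LIM. This holds whatever $\widetilde{\omega}_{EF}$ is, because the $F$ register never enters the quantity $I(R;E'|Q'S)_1$. The consequence $I(R;E'|Q'SF)_1=0$ then follows from Eq.~\eqref{eq:system-decoupling-lifts}.

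Concretely, I fix $\widetilde{\rho}_{SRQ}$ with $I(R;Q|S)_0=0$ and an arbitrary compatible $\widetilde{\omega}_{EF}$, and form the extended LIM of Definition~\ref{def:eLIM}.

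\begin{itemize}
\item Tracing out $F$ gives the marginal $\widetilde{\rho}_{SRQ}\otimes\omega_E$. Since $E$ is independent of $SRQ$, we have $I(R;QE|S)_0=I(R;Q|S)_0=0$.
\item The first interaction is a unitary on $QE$ and leaves $S$ and $R$ untouched. Unitary invariance of conditional mutual information therefore gives $I(R;Q'E'|S)_1=0$.
\item The chain rule splits this as $I(R;Q'|S)_1+I(R;E'|Q'S)_1$. Both terms are nonnegative by strong subadditivity, so $I(R;E'|Q'S)_1=0$.
\item Eq.~\eqref{eq:system-to-two-sided-identity} reads $I(R;E'|Q'SF)_1=I(R;E'|Q'S)_1-I(R;F|Q'S)_1$. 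Together with nonnegativity, this gives $0\leqslant I(R;E'|Q'SF)_1\leqslant 0$, which is Eq.~\eqref{eq:screening-two-sided-zero}.
\end{itemize}

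For the "explains every later revival" clause, I take any later unitary $V$ on $Q'E'$. Nonnegativity gives $I(R;E''|Q''SF)_2\geqslant 0=I(R;E'|Q'SF)_1$. By Remark~\ref{rem:QCMI-explainability}, this is equivalent to $I(R;Q'SF)_1\geqslant I(R;Q''SF)_2$, so retaining $SF$ explains the revival. Taking $F$ trivial, the same chain of steps (with $SF$ replaced by $S$) gives the system-side explanation.

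The only step needing care is the equivalence in Remark~\ref{rem:QCMI-explainability}, which I would re-derive to be safe. Write $I(R;Q'E'SF)_1=I(R;Q'SF)_1+I(R;E'|Q'SF)_1$, and similarly at $t_2$. The left-hand sides agree because $V$ is unitary on $Q'E'$ and acts trivially on $R$, $S$ and $F$. Subtracting the two decompositions gives the equivalence.
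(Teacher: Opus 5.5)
Your proposal is correct and follows essentially the same route as the paper. You push $I(R;Q|S)_0=0$ through the first unitary to $I(R;Q'E'|S)_1=0$ and use the chain rule to get $I(R;E'|Q'S)_1=0$; you lift this to $I(R;E'|Q'SF)_1=0$ via Eq.~\eqref{eq:system-decoupling-lifts}; and you conclude with the QCMI criterion of Remark~\ref{rem:QCMI-explainability}. The only differences are cosmetic: you work with the $F$-traced marginal from the start rather than first treating trivial $F$, and you re-derive the QCMI equivalence.
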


\noindent{\it Proof.}
First take $F$ to be trivial. Initial factorization and unitary invariance give
\[
I(R;Q'E'|S)_1
=
I(R;QE|S)_0
=
I(R;Q|S)_0
=
0.
\]
The chain rule then implies $I(R;E'|Q'S)_1=0$. For arbitrary compatible $F$,
tracing out $F$ recovers the same $SRQ'E'$ marginal, so the same equality still
holds. Equation~\eqref{eq:system-decoupling-lifts} now gives
Eq.~\eqref{eq:screening-two-sided-zero}. The QCMI criterion in
Remark~\ref{rem:QCMI-explainability} completes the proof.
\ensuremath{\blacksquare}%

The reverse stability already fails for classical states and a swap interaction.
The example is intentionally simple: $I(R;Q)_0=0$, but conditioning on $S$
reveals an $R$--$Q$ correlation that is otherwise hidden.

\begin{definition}{Example}(Failure of environment-side Markov screening)
\label{ex:environment-conditioning-instability}
Let $R$, $S$, $Q$, $E$, and $F$ be classical bits, represented by diagonal
states. Set
\begin{align}
\widetilde{\rho}_{SRQ}
=
\frac{1}{4}\bigl(&
\ket{000}\!\bra{000}
+
\ket{001}\!\bra{001}
+
\ket{010}\!\bra{010}
+
\ket{111}\!\bra{111}
\bigr)_{SRQ}\;,
\label{eq:classical-asymmetry-state}
\end{align}
where the order is $S,R,Q$, and let
\begin{equation}
\widetilde{\omega}_{EF}
=
\frac{I_E}{2}\otimes\frac{I_F}{2}.
\label{eq:classical-asymmetry-environment}
\end{equation}
Let the first interaction swap $Q$ and $E$, so $Q'=E$ and $E'=Q$. The reduced
initial state is
\[
\rho_{RQ}
=
\frac{I_R}{2}\otimes\frac{I_Q}{2},
\]
and therefore $I(R;Q)_0=0$. Since $F$ is independent and inert,
\begin{align}
I(R;E'|Q'F)_1
&=
I(R;E'|Q')_1
=
I(R;Q|E)_0
=
I(R;Q)_0
=
0.
\label{eq:example-environment-side-zero}
\end{align}
Thus exact environment-side screening holds.

However, conditioning also on $S$ reveals the hidden correlation:
\begin{align}
I(R;E'|Q'SF)_1
&=
I(R;Q|ESF)_0
=
I(R;Q|S)_0
=
\frac{3}{4}\log 3-\log 2
>
0.
\label{eq:example-two-sided-positive}
\end{align}
Hence adding the correlated system-side register destroys exact environment-side
screening. The effect is classical and requires only a swap.
\nobreak\hfill\ensuremath{\square}\end{definition}

Although adding $S$ may destroy exact environment-side screening, the joint $SF$
extension may still explain a particular revival. The following proposition
gives the exact criterion.

\begin{theorem}{Proposition}(Explanation criterion after environment-side Markov
screening)
\label{prop:environment-side-two-sided-criterion}
Fix a compatible extended LIM with inert $S$ and $F$, and assume
\begin{equation}
I(R;E'|Q'F)_1=0.
\label{eq:environment-side-decoupling}
\end{equation}
Then the joint $SF$ extension explains the selected revival within this fixed
LIM if and only if
\begin{equation}
I(R;S|Q''F)_2
-
I(R;S|Q'F)_1
\leqslant
I(R;E''|Q''F)_2.
\label{eq:environment-side-residual-criterion}
\end{equation}
\nobreak\hfill\ensuremath{\square}\end{theorem}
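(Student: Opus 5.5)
The plan is to reduce everything to the QCMI form of the explanation criterion in Remark~\ref{rem:QCMI-explainability}, and then to rewrite both sides with the chain-rule identity from the proof of Theorem~\ref{thm:asymmetric-conditioning}, applied at both times. By that remark, retaining $SF$ explains the revival in the fixed model if and only if
\[
I(R;E''|Q''SF)_2 \geqslant I(R;E'|Q'SF)_1 .
\]
This equivalence follows from $I(R;Q'E'SF)_1=I(R;Q''E''SF)_2$, which holds because $V$ acts unitarily on $Q'E'$ and leaves $R,S,F$ untouched, together with the chain rule. All quantities are finite because the systems are finite-dimensional, so moving terms across the inequality is legitimate.

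First, the intermediate time. I will use Eq.~\eqref{eq:environment-to-two-sided-identity} together with the hypothesis~\eqref{eq:environment-side-decoupling}. This gives
\[
I(R;E'|Q'SF)_1 = I(R;S|Q)_0 - I(R;S|Q'F)_1 .
\]

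Second, the final time. I will repeat the two-way expansion, now of $I(R;E''S|Q''F)_2$, which gives
\[
I(R;S|Q''F)_2 + I(R;E''|Q''SF)_2 = I(R;E''|Q''F)_2 + I(R;S|Q''E''F)_2 .
\]
The key observation is that $I(R;S|Q''E''F)_2$ reduces to initial data. The composite $VU$ acts unitarily on $QE\to Q''E''$ and leaves $R,S,F$ inert. Hence $I(R;S|Q''E''F)_2 = I(R;S|QEF)_0$. The initial product form $\widetilde{\rho}_{SRQ}\otimes\widetilde{\omega}_{EF}$ then gives $I(R;S|QEF)_0=I(R;S|Q)_0$, exactly as in the proof of Theorem~\ref{thm:asymmetric-conditioning}. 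Therefore
\[
I(R;E''|Q''SF)_2 = I(R;E''|Q''F)_2 + I(R;S|Q)_0 - I(R;S|Q''F)_2 .
\]

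Substituting both expressions into the QCMI criterion, the common term $I(R;S|Q)_0$ cancels. The remaining inequality rearranges to
\[
I(R;S|Q''F)_2 - I(R;S|Q'F)_1 \leqslant I(R;E''|Q''F)_2,
\]
which is Eq.~\eqref{eq:environment-side-residual-criterion}. Every step is an equivalence, so both directions of the claim follow at once.

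The only step needing care is the final-time reduction $I(R;S|Q''E''F)_2=I(R;S|Q)_0$. There one must check that conditioning on the full active system $Q''E''$, together with the inert $F$, is invariant under the composite unitary. Once the conditioning is on $QEF$, the environment side factorizes off, which justifies dropping $EF$ from the conditioning.
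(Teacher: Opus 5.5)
Your proof is correct and follows essentially the paper's route: the QCMI criterion of Remark~\ref{rem:QCMI-explainability}, a chain-rule expansion at both times, cancellation of the term conditioned on the full active system, and then the hypothesis $I(R;E'|Q'F)_1=0$. The only difference is a small detour: you reduce $I(R;S|Q''E''F)_2$ and $I(R;S|Q'E'F)_1$ separately to $I(R;S|Q)_0$ using the composite unitary and the initial product form, whereas the paper cancels them directly using only the unitarity of $V$, which needs no appeal to the initial factorization.
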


\noindent{\it Proof.}
By Remark~\ref{rem:QCMI-explainability}, the joint extension explains the
revival exactly when
\begin{equation}
I(R;E''|Q''SF)_2
\geqslant
I(R;E'|Q'SF)_1.
\label{eq:two-sided-condition-fixed-extension}
\end{equation}
Expanding both sides by the chain rule gives
\begin{align}
I(R;E''|Q''F)_2
+
I(R;S|Q''E''F)_2
-
I(R;S|Q''F)_2
\nonumber\\
\geqslant{}
I(R;E'|Q'F)_1
+
I(R;S|Q'E'F)_1
-
I(R;S|Q'F)_1.
\label{eq:expanded-two-sided-condition}
\end{align}
The second interaction is unitary on $Q'E'$ and leaves $R,S,F$ inert, so
\[
I(R;S|Q''E''F)_2
=
I(R;S|Q'E'F)_1.
\]
These terms cancel. Using Eq.~\eqref{eq:environment-side-decoupling} leaves
exactly Eq.~\eqref{eq:environment-side-residual-criterion}.
\ensuremath{\blacksquare}%

The left-hand side of Eq.~\eqref{eq:environment-side-residual-criterion} is the
increase in the $R$--$S$ correlation revealed by conditioning on the accessible
system and $F$. The right-hand side is the final information about $R$ that
remains in the active environment beyond $Q''F$. Thus the fixed joint extension
explains the revival exactly when the former does not exceed the latter. This
condition is automatic if $I(R;S|Q''F)_2\leqslant I(R;S|Q'F)_1$.

There is no conflict with the existential inclusion structure. Under the
hypothesis $I(R;E'|Q'F)_1=0$, discarding $S$ gives an environment-side
explanation with the same $F$, and adjoining a trivial $S$ gives a global
explanation. The proposition asks the stricter question of whether the already
present joint $SF$ extension explains the selected revival.

\begin{definition}{Remark}(Mixing, memory, and convexity)
\label{rem:mixing}
A long-standing puzzle is that mixing
processes that are individually revival-free
may produce an information
revival~\cite{buscemi-2025-non-causal-backflow}. In our framework,
Definition~\ref{def:eLIM} requires the
initial state to have the product form
$\widetilde{\rho}_{SRQ}\otimes\widetilde{\omega}_{EF}$.
A mixture of such states need not
remain a product state when both
$\widetilde{\rho}_{SRQ}$ and $\widetilde{\omega}_{EF}$ vary. Two cases
nevertheless allow convexity to be
established.

First, fix a probe $\rho_{RQ}$ and consider
a finite mixture of admissible triples on
the same reference and system spaces. Let $x$ label
the selected process and $p_x$ its
probability. Suppose each process is revival-free
or environment-side explainable. For each process,
choose a compatible LIM and, when needed, an
explanatory register $F$. A classical
register in the environment stores $x$ and
controls both interactions of the selected
LIM, while an inert copy $C$ of $x$ is
retained alongside $F$. Because the
selection is independent of the probe, the
initial state retains the required product form.

Second, fix a LIM and mix probes $\rho^{(x)}_{RQ}$.
Suppose each resulting process is revival-free or
has a system-side explanation within this
same LIM. The label $x$ can now be stored in
an inert classical register $C$ alongside
the corresponding explanatory register $S$.
The environment state and both interactions
are common to all processes, so the initial
product form is again preserved.

In these two constructions, write $X=F$ and $X=S$,
respectively, and take $X$ trivial for 
processes without a revival. The finitely 
many environments and inert registers can be 
embedded into common spaces when necessary. 
Since the $RC$ marginal is unchanged between $t_1$ and $t_2$,
the contribution $I(R;C)$ cancels when 
comparing the two times. The chain rule and 
the classicality of $C$ therefore give
\begin{align}
&I(R;Q'XC)_1-I(R;Q''XC)_2 \notag\\
&\quad=\sum_x p_x\left[I(R;Q'X)^{(x)}_1-I(R;Q''X)^{(x)}_2\right]\geqslant0.
\end{align}
Consequently, the mixture is either revival
free or environment-side explainable in the 
first case, and either revival-free or 
system-side explainable within the fixed LIM 
in the second.

For a fixed probe, the record $x$ is a
memory of which dynamics was selected,
independent of the initial preparation. Its
inert copy $C$ never participates in the
interactions. Retaining this copy explains
revivals that arise only after the record
has been discarded, as in Ref.~\cite{buscemi-2025-non-causal-backflow}.

For general mixtures, the same label may
need to select both the extended preparation
and the environment state or interactions.
Retaining that label on both sides generally
introduces initial correlations across the
system--environment cut, contrary to
Definition~\ref{def:eLIM}. This failure of
the construction does not prove non
convexity of the class without genuine
backflow, since another compatible extended
LIM might explain the mixture. Whether
admissible triples that are revival-free or
globally explainable are closed under 
arbitrary convex mixtures that remain 
admissible is therefore left open.
\nobreak\hfill\ensuremath{\square}\end{definition}

\section{Numerical illustrations of explanatory mechanisms}
\label{sec:numerical-illustrations}

We conclude with four numerical illustrations. The first three use independently
selected extended LIMs to show three ways in which inert registers can
compensate a reduced revival. The fourth varies the initial probe and
illustrates the role of probe entanglement. These are existence examples, not
estimates of the frequency of the different mechanisms. Full-precision 
states, Hamiltonians,
seeds, arrays, configurations, and validation records are available in the
archive cited in the~\hyperref[sec:archived-numerical-release]{data 
and code availability} statement.

\subsection{Fixed-extension compensation patterns}
\label{subsec:representative-scenarios}

In the first three examples, all systems are qubits, the initial state has the
compatible form $\widetilde{\rho}_{SRQ}\otimes\widetilde{\omega}_{EF}$, and only
$Q$ and $E$ interact. Within each example they evolve under a single
time-independent Hamiltonian,
\[
W(t)=e^{-itH}.
\]
Once this continuous trajectory has been fixed, the intermediate and final times
may be chosen arbitrarily, subject only to $t_0<t_1<t_2$; in particular, no
relation such as $t_2=2t_1$ is required. For the three displayed examples we
choose, only for convenience,
\[
t_0=0,
\qquad
t_1=2,
\qquad
t_2=4.
\]
The two intervals consequently have the same duration in these particular plots,
so that the same unitary $e^{-2iH}$ acts during both of them. The three examples
use independently selected initial states and Hamiltonians; they are not three
trajectories of one physical model.
\begin{figure}[t]
\centering
\includegraphics[width=\textwidth]{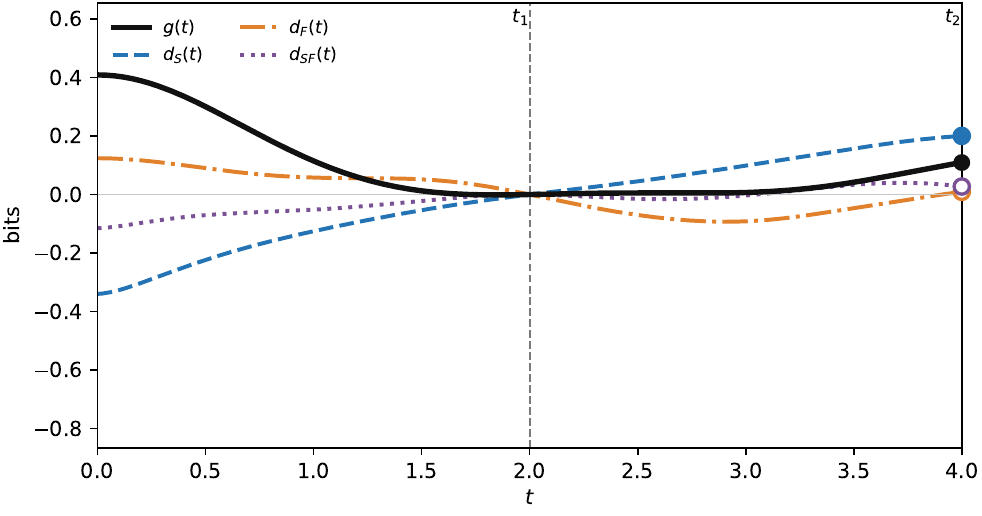}
\caption{Case A: system-side compensation in a fixed extended LIM. At $t_2$,
$d_S$ exceeds the reduced gain, whereas $d_F$ and $d_{SF}$ do not. Thus the
displayed $S$ explains the revival, while the displayed $F$ and nontrivial $SF$
do not.}
\label{fig:system-world-compensation}
\end{figure}
\FloatBarrier
For a retained inert register $X\in\{S,F,SF\}$, define the reduced gain and its
compensation by
\begin{align}
g(t)&:=I(R;Q_t)-I(R;Q')_1,
&
d_X(t)&:=I(R;X|Q')_1-I(R;X|Q_t).
\label{eq:numerical-gain-compensation}
\end{align}
The chain rule gives
\begin{equation}
I(R;Q_tX)-I(R;Q'X)_1=g(t)-d_X(t).
\label{eq:numerical-gain-compensation-identity}
\end{equation}
All the plotted curves therefore vanish at $t_1$. At $t_2$, the black curve
$g(t)$ gives the reduced revival, while a colored curve at or above it means
that the corresponding displayed register explains that revival. Only the
interval $t\geqslant t_1$ is relevant to this comparison. On the 
compensation curves,
filled endpoint markers indicate this condition, while open markers indicate a
positive residual gain. The continuous trajectories also show that the
explanatory power of a fixed inert extension may depend on the chosen final
time. Keeping $t_1$ and the extended LIM fixed, any $t>t_1$ for which $g(t)>0$
may be regarded as the endpoint of a revival, and the displayed register $X$
explains that revival exactly when
\[
d_X(t)\geqslant g(t).
\]
As the two curves may cross during the evolution, the same inert extension may
explain a revival ending at one time but fail to explain a revival ending at a
later time, or conversely. Explainability within a fixed extension is therefore
a property of the selected pair of times, rather than a permanent property of
the extension alone. The three fixed-extension patterns are displayed in
Figs.~\ref{fig:system-world-compensation},
\ref{fig:environment-world-compensation}, and \ref{fig:two-sided-compensation}.
They show, respectively, compensation by the displayed system-side register, by
the displayed environment-side register, and only by the displayed joint
register.
\begin{figure}[t]
\centering
\includegraphics[width=\textwidth]{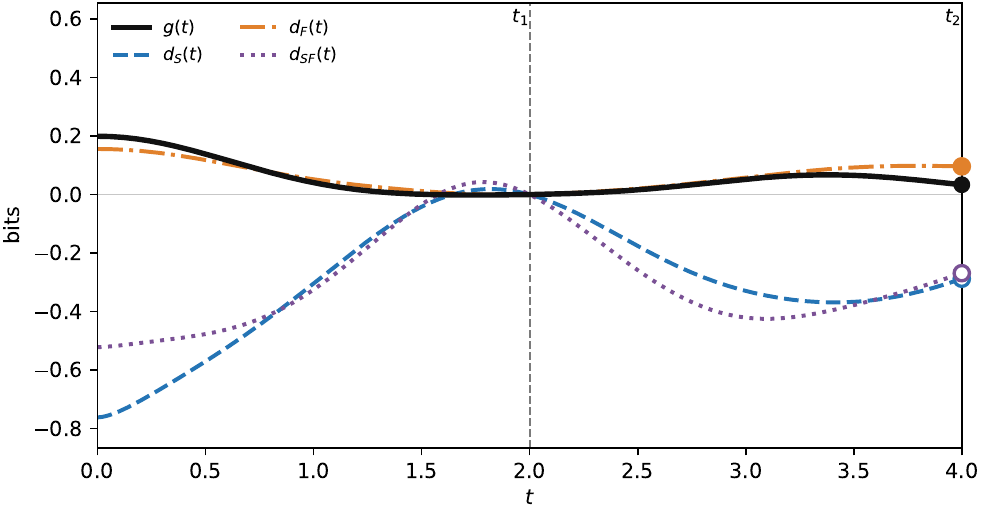}
\caption{Case B: environment-side compensation in a fixed extended LIM. At
$t_2$, only $d_F$ exceeds the reduced gain. Thus only the displayed $F$ explains
the revival.}
\label{fig:environment-world-compensation}
\end{figure}
\FloatBarrier

These conclusions concern only the displayed registers in the fixed extended
LIM. A curve below $g(t_2)$ does not rule out a different compatible extension.
In particular, the failure of the displayed nontrivial $SF$ register in Cases A
and B does not conflict with global explainability: the successful one-sided
explanation is itself a global explanation when the register on the opposite
side is taken to be trivial. The figures concern compensation of the selected
endpoint revival, not exact Markov screening at $t_1$.
\subsection{Isotropic probes and the reduced-gain map}
\label{subsec:entanglement-landscape}

We next fix the interaction and vary the initial probe. Consider the isotropic
family
\begin{equation}
\rho_{RQ}^{(\eta)}
=
\eta\,\ket{\Phi^+}\!\bra{\Phi^+}_{RQ}
+(1-\eta)\frac{I_{RQ}}{4},
\qquad
0\leqslant\eta\leqslant1.
\label{eq:isotropic-entanglement-family}
\end{equation}
Its preparable set is noncommuting for every $\eta>0$, whereas the probe is
separable exactly when $\eta\leqslant1/3$. We take $\omega_E=I_E/2$ and use, for
every $\eta$, the exchange Hamiltonian
\begin{equation}
H_{\mathrm{ex}}
=
\frac{\pi}{8}
\bigl(
\sigma_x\otimes\sigma_x
+\sigma_y\otimes\sigma_y
+\sigma_z\otimes\sigma_z
\bigr),
\qquad
W_{\mathrm{ex}}(t)=e^{-itH_{\mathrm{ex}}}.
\label{eq:landscape-hamiltonian}
\end{equation}
This interaction performs a full SWAP at $t_1=2$ and returns to the identity at
$t=4$, up to global phases. The quantity shown in the left panel of
Fig.~\ref{fig:entanglement-revival-landscape} is
\begin{equation}
r(\eta,t)
:=
\max\bigl\{0,I(R;Q_t)_\eta-I(R;Q_{t_1})_\eta\bigr\}.
\label{eq:reduced-revival-landscape}
\end{equation}
Only $t>t_1$ is
interpreted as a possible final time; the region $t<t_1$ is shown to display the
preceding loss of correlation.
\begin{figure}[t]
\centering
\includegraphics[width=\textwidth]{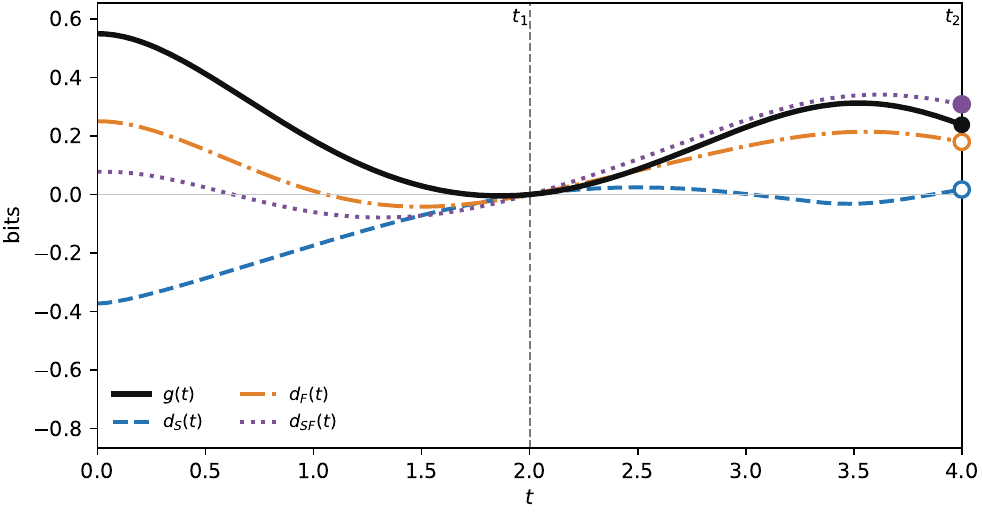}
\caption{Case C: joint compensation in a fixed extended LIM. At $t_2$, $d_{SF}$
exceeds the reduced gain, whereas $d_S$ and $d_F$ do not. Thus the displayed
joint register explains the revival, while neither displayed one-sided register
does.}
\label{fig:two-sided-compensation}
\end{figure}
\FloatBarrier
The right panel sets $r(\eta,t)$ to zero whenever at least one tested inert
extension has an extended gain no larger than the numerical zero tolerance
$10^{-3}$ bits. For $\eta\leqslant1/3$, the calculation includes the 
exact five-label
system-side screening extension obtained from a tetrahedral separable
decomposition of the isotropic probe. This extension satisfies $I(R;Q|S)_0=0$
and therefore explains every revival, by
Corollary~\ref{cor:system-side-screening-criterion}. For $\eta>1/3$, the test
uses a finite family of compatible extensions; it is not an
optimization over all compatible extensions. The complete construction,
finite-family specification, and validation records are included in 
the \hyperref[sec:archived-numerical-release]{archived numerical 
release}. Figure~\ref{fig:entanglement-revival-landscape} compares the
reduced-gain map with the same map after setting to zero the points explained by
the tested inert extensions.

\begin{figure}[t]
\centering
\includegraphics[width=\textwidth]{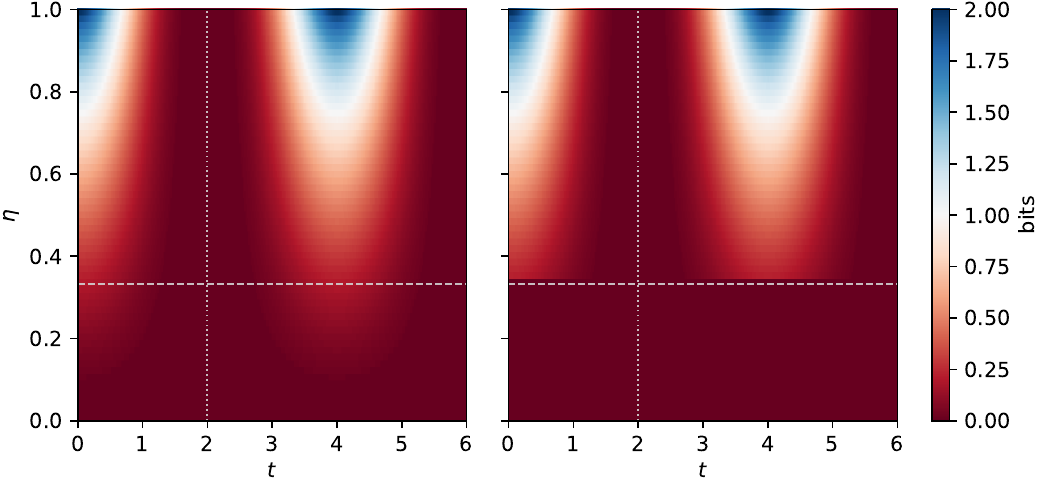}
\caption{Two-parameter reduced-gain map for the isotropic probes in
Eq.~\eqref{eq:isotropic-entanglement-family} under the exchange Hamiltonian in
Eq.~\eqref{eq:landscape-hamiltonian}. Left: the positive reduced gain relative
to $t_1=2$. Right: the same gain after points explained by the tested extensions
are set to zero. The vertical dotted line marks $t_1=2$, and the horizontal
dashed line marks the entanglement threshold $\eta=1/3$. The region at or below
that threshold vanishes in the right panel because the exact screening extension
is included. Positive values above the threshold in that panel mean only that
the tested finite family did not provide an explanation; they do not establish
genuine backflow.}
\label{fig:entanglement-revival-landscape}
\end{figure}

Thus the interval $0<\eta\leqslant1/3$ gives a direct illustration of 
the distinction
between the two necessary conditions found in Section~\ref{sec:results}: the
preparable set is already noncommuting, but the probe is separable and every
revival is extrinsic. Noncommutativity is therefore necessary, but not
sufficient, for genuine backflow.
\section{Conclusion}

We extended the distinction between extrinsic revivals and genuine information
backflow from quantum channels to open-system processes with restricted initial
preparations. The preparable states are encoded through a reference--system
probe, which determines both the available inputs and the information monitored
during the evolution.

Our central structural result is exact: a probe is quantum--classical on the
system if and only if its preparable set is commuting. This removes any
ambiguity associated with the choice of probe in the classical case. Since every
quantum--classical probe is separable, and every separable probe admits an inert
classical extension that explains all its revivals, genuine backflow under the
global convention requires both an entangled probe and a noncommuting preparable
set. Neither condition is sufficient by itself.

We also established the V-shaped hierarchy among system-side, environment-side,
and global explanations, and identified an asymmetry between the stability of
system-side and environment-side screening within a fixed extended model.
Together, these results separate the observation of an information revival from
its dynamical interpretation. In particular, classical information can revive,
but such a revival cannot by itself certify that information has returned from
the environment.

\section*{Data and code availability}
\label{sec:archived-numerical-release}
The source code and numerical data supporting this work are openly available in
Zenodo, version~1.0.1, at
\href{https://doi.org/10.5281/zenodo.22702111}{%
\nolinkurl{10.5281/zenodo.22702111}}, under the MIT
License.

\section*{Declarations}

\noindent\textit{Competing interests.}
The authors declare no competing interests.

\medskip
\noindent\textit{Use of AI-assisted tools.}
OpenAI ChatGPT/Codex was used for language editing, LaTeX assistance, and review
of the numerical-code presentation. The authors independently verified the
mathematical arguments, numerical results, citations, and final text and take
full responsibility for the content.

\section*{Acknowledgments}

The authors acknowledge support from the MEXT Quantum Leap Flagship Program
(MEXT QLEAP), Grant No.~JPMXS0120319794, and from JSPS KAKENHI, Grant
Nos.~23K03230 and 26K00621.

{\small
\setlength{\itemsep}{0pt}
\setlength{\parsep}{0pt}
}

\end{document}